\def\eqref#1{equation~\ref{#1}}
\def\1{\bm{1}}
\DeclareMathAlphabet{\mathsfit}{\encodingdefault}{\sfdefault}{m}{sl}
\SetMathAlphabet{\mathsfit}{bold}{\encodingdefault}{\sfdefault}{bx}{n}
\definecolor{LightCyan}{rgb}{0.88, 0.96, 0.92}
\definecolor{cvprblue}{rgb}{0.21, 0.74, 0.49}
\newcolumntype{C}[1]{>{\centering\arraybackslash}m{#1}}
\theoremstyle{plain}
\newtheorem{proposition}{Proposition}[section]
\newtheorem{corollary}[proposition]{Corollary}
\theoremstyle{remark}
\def\method{AlphaSAGE}
\title{Alpha\textcolor{cyan}{SAGE}: \textcolor{cyan}{S}tructure-\textcolor{cyan}{A}ware Alpha Mining via \textcolor{cyan}{G}FlowNets for Robust \textcolor{cyan}{E}xploration}
\author{
\textbf{Binqi Chen}\textsuperscript{$\spadesuit$}\thanks{Internship at Zhengren Quant.}\And
\textbf{Hongjun Ding}\textsuperscript{$\heartsuit$}\And
\textbf{Ning Shen}\textsuperscript{$\bigstar$}\And
\textbf{Taian Guo}\textsuperscript{$\spadesuit$}\footnotemark[1]\AND
\textbf{Jinsheng Huang}\textsuperscript{$\spadesuit$}\footnotemark[1]\And
\textbf{Luchen Liu}\textsuperscript{$\diamondsuit$}\And
\textbf{Ming Zhang}\textsuperscript{$\spadesuit$}\thanks{Corresponding author.}
}
\begin{document}
\maketitle

\vspace{-0.9cm}
$^\spadesuit$School of Computer Science, National Key Laboratory for Multimedia Information Processing,\\\quad PKU-Anker LLM Lab, Peking University, China
  \quad
  $^\diamondsuit$Zhengren Quant, Beijing, China\\
  $^\heartsuit$Baruch College, City University of New York \quad
  $^\bigstar$Statistics, University of British Columbia\\
  \{cbq, taianguo, hjs\}@stu.pku.edu.cn,\ hongjun.ding.baruchmfe@gmail.com, \\ ning.shen@stat.ubc.ca,\
    liulc@zhengrenquant.com,\  mzhang\_cs@pku.edu.cn
\vspace{0.5cm}
\begin{abstract}
The automated mining of predictive signals, or alphas, is a central challenge in quantitative finance. While Reinforcement Learning (RL) has emerged as a promising paradigm for generating formulaic alphas, existing frameworks are fundamentally hampered by a triad of interconnected issues. First, they suffer from reward sparsity, where meaningful feedback is only available upon the completion of a full formula, leading to inefficient and unstable exploration. Second, they rely on semantically inadequate sequential representations of mathematical expressions, failing to capture the structure that determine an alpha's behavior. Third, the standard RL objective of maximizing expected returns inherently drives policies towards a single optimal mode, directly contradicting the practical need for a diverse portfolio of non-correlated alphas. To overcome these challenges, we introduce \textbf{\method{}} (\underline{\textbf{S}}tructure-\underline{\textbf{A}}ware Alpha Mining via \underline{\textbf{G}}enerative Flow Networks for Robust \underline{\textbf{E}}xploration), a novel framework built upon three cornerstone innovations: (1) a structure-aware encoder based on Relational Graph Convolutional Network (RGCN); (2) a new framework with Generative Flow Networks (GFlowNets); and (3) a dense, multi-faceted reward structure. Empirical results demonstrate that \method{} outperforms existing baselines in mining a more diverse, novel, and highly predictive portfolio of alphas, thereby proposing a new paradigm for automated alpha mining. Our code is available at \url{https://github.com/BerkinChen/AlphaSAGE}.
\end{abstract}
\section{Introduction}
\label{sec:intro}
The primary objective in quantitative trading is to identify and exploit market inefficiencies, a pursuit centered on the mining of ``alphas''. These alphas are predictive signals, typically represented as mathematical expressions, that aim to forecast asset returns and thus serve as the cornerstone of systematic trading strategies.\footnote{An example of alpha is shown in Figure~\ref{fig:intro}.} Therefore, \emph{alpha mining} (efficient construction of high-quality alphas) constitutes the core of quantitative research: high-quality alphas enable more accurate return forecasting, improved risk-adjusted portfolio construction, and ultimately superior excess returns. 

Traditionally, alpha mining has been a manual, hypothesis-driven process. Researchers propose financial or economic hypotheses, translate them into candidate alphas, and validate their predictive power through statistical tests or backtesting. While this pipeline has led to influential discoveries such as value, momentum, and quality alphas~\citep{101formulaicalphas}, it suffers from limited scalability and strong reliance on human intuition. With the increasing complexity of financial markets, the hypothesis-driven paradigm struggles to cope with vast, non-linear interactions in high-dimensional data, making it increasingly challenging to uncover novel and uncorrelated signals.

Recent advances have motivated the shift towards automated alpha mining, where machine learning algorithms systematically search through the enormous combinatorial space of possible formulas. Early efforts often relied on Genetic Algorithm (GA)~\citep{chen_2021_gp,zhang_autoalpha_2020,cui_alphaevolve_2021}, which evolves candidate formulas using mutation and crossover operators. Despite producing interpretable formulas, GA methods can be computationally inefficient and tend to converge to local optimum if mutation rate is not carefully designed. More recently, Reinforcement Learning (RL)~\citep{alphagen,zhu_alphaqcm,zhao_quantfactor_2024,xu_textalpha2_2024} has emerged as a powerful alternative, framing alpha construction as a sequential decision-making process in which an agent incrementally builds formulas. RL-based methods promise higher efficiency and scalability but also inherit several critical challenges, including reward sparsity, structural underrepresentation, and limited diversity in generated alphas.

\begin{wrapfigure}{r}{0.45\textwidth}
    \centering
    \includegraphics[width=0.45\textwidth]{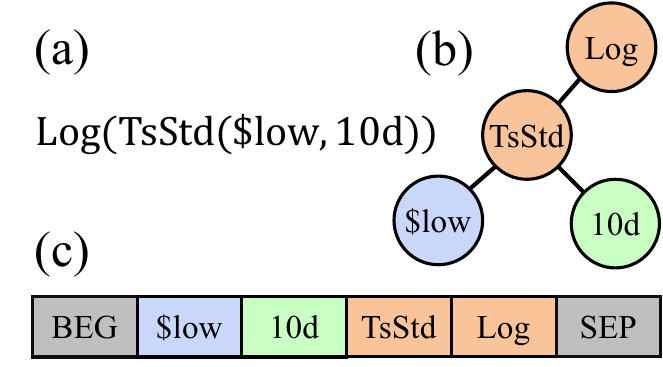}
    \caption{Different forms of alpha: (a) Formulated alpha; (b) Alpha's expression tree; (c) Reverse Polish Notation (RPN) for alpha.}
    \label{fig:intro}
    \vspace{-0.2cm}
\end{wrapfigure}

The direct application of RL to alpha mining is generally fraught with significant obstacles that limit its efficacy. First, current methodologies often suffer from a severe ``cold-start'' problem, as the reward signal—typically based on an alpha's Information Coefficient (IC)—is extremely sparse~\citep{zhao2025learningexpertfactorstrajectorylevel}. Second, most existing approaches represent alpha expressions as simple sequences of tokens, often processed by sequential models such as LSTMs. Such representations fail to capture the logical and hierarchical structure inherent in formulas. Finally, the traditional RL, which is designed to maximize a singular reward function, tends to produce a relatively uniform path for alpha mining, lacking the diversity essential for constructing robust portfolios~\citep{tang2025alphaagentllmdrivenalphamining}.

To overcome these limitations, we propose \textbf{\method{}} (\underline{\textbf{S}}tructure-\underline{\textbf{A}}ware Alpha Mining via \underline{\textbf{G}}enerative Flow Networks for Robust \underline{\textbf{E}}xploration), a comprehensive framework designed to address the core problems of exploration, semantic understanding, and diversity in alpha mining. To evaluate the effectiveness of the model, we conducted extensive experiments based on real historical data from both the Chinese and U.S. stock markets. The experimental results demonstrate that the model outperforms existing models across different markets.

In summary, our contributions are as follows:
\begin{itemize}
    \item We introduce \textbf{a structure-aware encoder} based on Relational Graph Convolutional Network (RGCN)~\citep{schlichtkrull2018rgcn} that operates on Abstract Syntax Tree (AST) representations of alphas to capture their semantic and compositional nature.
    \item We propose \textbf{a generative framework} using Generative Flow Networks (GFlowNets)~\citep{bengio2021flow,malkin2022tb,bengio2023gfn} that learns to sample a diverse set of candidates, directly addressing the need for a varied alpha portfolio.
    \item We present \textbf{a dense, multi-faceted reward function} that combines terminal performance with intrinsic rewards for structural integrity and novelty to effectively guide the GFlowNet's exploration.
\end{itemize}

\section{Background and related work}
\label{sec:related}

\subsection{Alphas in quantitative trading}

In quantitative equity trading, an alpha is a function that maps past information about each asset, such as historical prices, volumes, or fundamental ratios, to a real valued score for every stock on a given date. At each rebalancing time \(t\), the alpha values \(s_{t,n}\) across stocks \(n\) are typically standardized in the cross section and then transformed into a portfolio signal, for example by going long on stocks with high scores and short on stocks with low scores under a risk and leverage budget. The next period portfolio return is
\begin{equation}
R_t = \sum_{n} w_{t,n} \, r_{t+1,n},
\end{equation}
where \(w_{t,n}\) are the portfolio weights derived from the alpha signal and \(r_{t+1,n}\) are realized returns. The predictive quality of an alpha is often summarized by the information coefficient, defined as the cross sectional correlation between \(s_{t,n}\) and \(r_{t+1,n}\) over time. In practice, production systems maintain a library of many such alphas and combine a subset of them into a diversified trading signal, which motivates automated methods for discovering alphas with strong predictive power and low mutual correlation.

\subsection{Alpha Mining and Combination}
\label{subsec:factor_mining_combination}

In quantitative finance, an \emph{alpha} is a deterministic transformation of historical market data into a signal that aims to forecast future returns. When expressed as a symbolic program (e.g., an abstract syntax tree), an alpha remains interpretable and auditable. Alpha quality is commonly summarized by correlation-based metrics (e.g., IC) computed between alpha outputs and subsequent returns.

Early discovery pipelines were manual and hypothesis-driven. More recent automation—most notably genetic algorithm (GA)~\citep{chen_2021_gp,zhang_autoalpha_2020,cui_alphaevolve_2021} and reinforcement learning (RL)~\citep{alphagen,zhu_alphaqcm,zhao_quantfactor_2024,xu_textalpha2_2024}—expanded the search space but introduced three recurring challenges: sparse and delayed rewards, weak encoding of alpha structure, and mode collapse toward a few similar solutions. Because single alphas are typically unstable across time and markets, practitioners assemble a library of alphas and combine them into a portfolio-level signal. Simple linear combinations are prevalent in practice, yet high correlations among alphas can make coefficient estimates unreliable and reduce both robustness and interpretability.

These observations motivate frameworks that jointly optimize for predictive power and diversity. Our design follows this principle: it encourages structurally distinct alphas during generation and combines them with a transparent, adaptively weighted scheme that emphasizes low cross-alpha dependence. Additional details appear in Appendix~\ref{app:alpha-supp}.

\subsection{Graph Neural Networks}

Graph neural networks (GNNs)~\citep{scarselli2008gnn,yao2019gcn,schlichtkrull2018rgcn} update each node by gathering information from its neighbors and then refining the node’s representation with that context. Stacking layers allows information to propagate over multiple hops, so nodes capture both local attributes and broader structural relations. When factor candidates are represented as graphs—such as trees for formulaic alphas—GNNs can encode semantic similarity and structural constraints more naturally than sequence models. This makes them attractive for learning embeddings of factors, guiding search over symbolic expressions, and measuring diversity at the representation level. Additional details appear in Appendix~\ref{app:gnn-supp}.

\subsection{Generative Flow Networks}

Generative Flow Networks (GFlowNets)~\citep{bengio2021flow,malkin2022tb,bengio2023gfn} are generative learners that construct objects step by step and aim to sample a diverse set of high-reward solutions rather than collapsing to a single optimum. They treat generation as moving through a directed acyclic state space from an initial empty state to a terminal, valid object. By learning complementary forward and backward policies and matching “flow” through states, GFlowNets approximate a sampling distribution that is shaped by the downstream reward. Practically, this yields exploration that is both reward-aware and diversity-seeking, producing a portfolio of candidates with varied structures and competitive quality—properties that are well aligned with the needs of alpha discovery and combination. Additional details appear in Appendix~\ref{app:gfn-supp}.
\section{Methodology}
\label{sec:methodology}

\subsection{Framework Overview and Problem Formulation}
\label{subsec:overview_problem_formulation}

The primary objective of automated alpha discovery is to navigate a vast, combinatorial search space $\mathcal{X}$ of potential mathematical expressions, or "alphas". Each alpha $\alpha \in \mathcal{X}$ is a function that maps historical market data for a universe of $N$ assets with $M$ features at day $d$, denoted as $X_d \in \mathbb{R}^{N \times M}$, to a vector of predictive signals $z_d = \alpha(X_d) \in \mathbb{R}^N$. The quality of these signals is evaluated against future asset returns $y_d \in \mathbb{R}^N$.

Existing RL frameworks model this as a sequential decision-making problem to construct a synergistic portfolio of alphas. In this paradigm, an agent iteratively generates new alphas to add to an evolving pool, $\mathcal{F}$. The reward for generating a new alpha, $\alpha_{\text{new}}$, is its marginal contribution to the performance of a combination model $c(\cdot)$ trained on the updated pool. The objective at each step is to find a alpha that maximizes this improvement:
\begin{equation}
    \alpha_{\text{new}}^* = \arg\max_{\alpha \in \mathcal{X}} \mathbb{E}\left[ R(\alpha | \mathcal{F}) \right],
\end{equation}
where the reward is defined as $R(\alpha | \mathcal{F}) = \text{IC}(c(X; \mathcal{F} \cup \{\alpha\})) - \text{IC}(c(X; \mathcal{F}))$. This formulation creates a non-stationary Markov Decision Process, as the reward for any given alpha changes whenever the pool $\mathcal{F}_t$ is updated. While this approach encourages synergy within the single, greedily constructed portfolio, it does not learn a global distribution over all high-quality alphas.

We reformulate alpha discovery as a problem of learning a \textbf{generative policy} $P_\theta(\alpha)$ that directly models the distribution of high-quality alphas over the entire space $\mathcal{X}$. The policy is trained such that the probability of sampling any alpha is proportional to a carefully designed reward function $R(\alpha)$, which reflects its intrinsic quality and novelty:
\begin{equation}
    P_\theta(\alpha) \propto R(\alpha), \quad \forall \alpha \in \mathcal{X},
\end{equation}
By sampling from this learned global distribution, rather than following a single construction path, we can generate a more diverse and robust portfolio of candidate alphas.

\begin{figure}[t]
    \centering
    \includegraphics[width=1.0\linewidth]{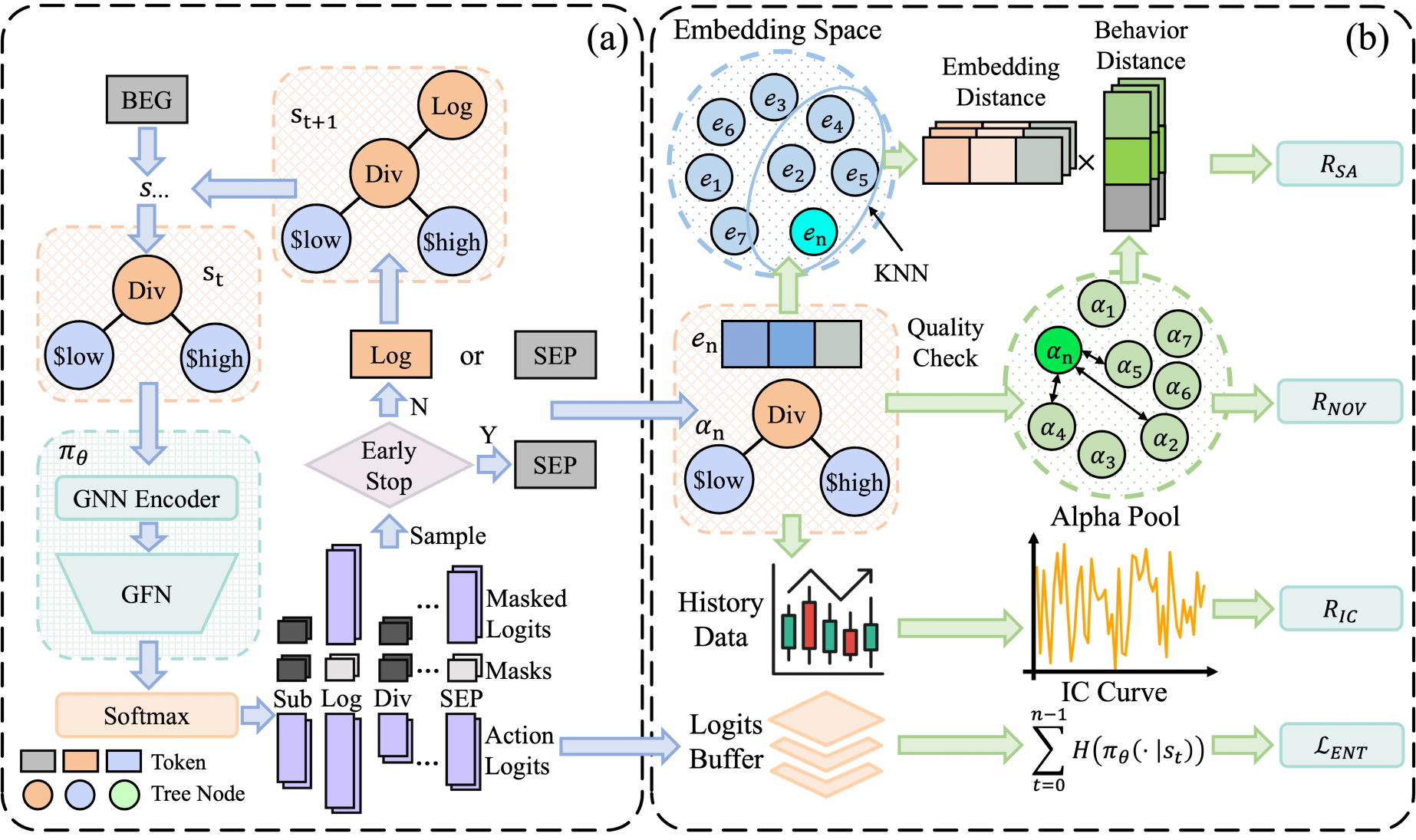}
    \caption{\textbf{An overview of AlphaSAGE.}
    \textbf{(a) AlphaGenerator.} Starting from an empty state, we iteratively construct a partial AST for the formula prefix. An RGCN encoder produces node and pooled graph embeddings, which condition the GFlowNet forward policy to sample the next token from syntactically valid actions only. Rollouts terminate by emitting \texttt{SEP} once the minimal length is met or MaxLen is reached. The resulting formula $\alpha_n$ and its embedding $e_n$ are stored for reward evaluation and training. \textbf{(b) AlphaEvaluator.} For each sampled $\alpha_n$, we compute reward components on historical cross-sectional data: $R_{\text{SA}}$ (structure alignment), $R_{\text{NOV}}$ (novelty via low correlation with a reference set), and $R_{\text{IC}}$ (predictive IC). We additionally use an entropy regularizer $\mathcal{L}_{\text{ENT}}$ to encourage exploration. The combined reward $R(\alpha_n)$ updates the GFlowNet and encoder via the trajectory balance objective $\mathcal{L}_{\text{TB}}$ augmented with $\mathcal{L}_{\text{ENT}}$.}
    \label{fig:main}
\end{figure}

\subsection{alpha Generation via Generative Flow Networks}
\label{subsec:gflownet}

To address the need for a diverse portfolio, we propose a new framework with GFlowNets. A GFlowNet is a probabilistic generative model designed to learn a stochastic policy for sampling objects $\alpha$ from a space $\mathcal{X}$ with probability $P(\alpha)$ proportional to a given reward function $R(\alpha)$.

The construction of an alpha is modeled as a trajectory $\tau = (s_0 \rightarrow s_1 \rightarrow \dots \rightarrow s_n = \alpha)$ in a state space represented as a directed acyclic graph (DAG).
\begin{itemize}
    \item \textbf{States ($s \in \mathcal{S}$)}: Partially constructed ASTs. The initial state $s_0$ is an empty tree. Terminal states are complete and valid ASTs, forming the space $\mathcal{X} \subset \mathcal{S}$.
    \item \textbf{Actions ($a \in \mathcal{A}$)}: Adding a new token (operator or feature) to an open leaf node of a partial AST. According to the state $s$, invalid actions are masked and the next token is sampled from the masked distribution.
    \item \textbf{Complete Trajectories}: A full trajectory corresponds to constructing a valid expression tree. Only such trajectories are considered terminal and eligible for evaluation.
\end{itemize}

To prevent expressions from growing excessively long or from being forcefully terminated into invalid states after exceeding the maximum token length, we incorporate an \textbf{early stop} mechanism. Specifically, when the current stack already forms a valid expression, the generation process may stop with a probability:
\begin{equation}
\label{eq:es}
    p = \frac{\text{Len}(s_t)}{\text{MaxLen}},
\end{equation}
where $\text{Len}(s_t)$ is the number of nodes in $s_t$, and $\text{MaxLen}$ is the maximum allowed length. This mechanism balances exploration of longer expressions with the efficiency of producing syntactically valid formulas.

A GFlowNet learns a forward policy $P_F(s_{t+1}|s_t; \theta)$ for constructing objects and a backward policy $P_B(s_t|s_{t+1}; \theta)$ for deconstruction. The training objective enforces a flow-matching condition throughout the state space, ensuring that the probability of generating a complete alpha $\alpha$ matches the target distribution:
\begin{equation}
    P(\alpha) = \sum_{\tau: s_n=\alpha} P_F(\tau) = \frac{R(\alpha)}{Z},
\end{equation}
where $Z = \sum_{\alpha' \in \mathcal{X}} R(\alpha')$ is a learnable parameter representing the total flow or partition function. While several loss functions exist, we use the Trajectory Balance (TB) loss, which is particularly suitable for our scenario as it focuses on full trajectories. The TB loss for a given trajectory $\tau$ is:
\begin{equation}
\label{eq:tb}
    \mathcal{L}_{\text{TB}}(\tau) = \left( \log Z_\theta + \sum_{t=1}^{n} \log P_F(s_t|s_{t-1}; \theta) - \log R(s_n) - \sum_{t=1}^{n} \log P_B(s_{t-1}|s_t; \theta) \right)^2,
\end{equation}
where $Z_\theta$ is a learnable scalar approximating total flow. Minimizing this loss over sampled trajectories trains the policy, resulting in a model that produces a diverse set of high-reward alphas.
\subsection{GNN Embedding and Structure-Aware Reward}
\label{subsec:gnn_embedding}

A fundamental limitation of existing methods is their reliance on sequential encoders (e.g., LSTMs) operating on flattened representations like Reverse Polish Notation. Such an approach fails to capture the hierarchical structure of mathematical expressions, treating logically equivalent formulas (e.g., $close+open$ and $open+close$) as different sequences. To overcome this, we first parse every formulaic alpha $\alpha$ into its corresponding AST, denoted as $\mathcal{T}_\alpha = (\mathcal{V}_\alpha, \mathcal{E}_\alpha)$, where $\mathcal{V}_\alpha$ is the set of nodes (operators and features) and $\mathcal{E}_\alpha$ is the set of edges representing the computational hierarchy. This representation is invariant to semantically inconsequential syntactic variations.

To capture the heterogeneity of relations between different types of operators and features in the $\mathcal{T}_\alpha$, we adopt RGCN as the encoder. Unlike standard GNNs that treat all edges uniformly, RGCNs explicitly model multiple relation types, which is crucial for distinguishing, for example, the edge between a temporal operator and a feature versus the edge between a temporal operator and its window length. 

Each node $v \in \mathcal{V}_\alpha$ is initialized with a feature vector $h_v^{(0)}$. At layer $l$, the hidden representation of node $v$ is updated as:
\begin{gather}
    h_v^{(l)} = \text{ReLU}\left( \sum_{r \in \mathcal{R}} \sum_{u \in \mathcal{N}_r(v)} \frac{1}{c_{v,r}} W_r^{(l)} h_u^{(l-1)} + W_0^{(l)} h_v^{(l-1)} \right),\\
    \label{eq:gnn}
    e_\alpha=\mathrm{MaxPooling}\big(\{h_v^{(L)}\}_{v\in \mathcal{V}_\alpha}\big),
\end{gather}
where $\mathcal{R}$ is the set of relation types, $\mathcal{N}_r(v)$ denotes the neighbors of node $v$ connected via relation $r$, $c_{v,r}$ is a normalization constant (e.g., $|\mathcal{N}_r(v)|$), $W_r^{(l)}$ is the trainable weight matrix specific to relation $r$, and $W_0^{(l)}$ is a self-loop transformation matrix. This embedding provides a relation-aware and structure-aware representation of the alpha.

To ensure that the learned embedding is not just structurally aware but also predictive of the alpha's actual behavior, we introduce a Structure-Aware (SA) reward. The goal is to further ensure that alphas with similar structural embeddings exhibit similar behavioral patterns.

Let $Z_i \in \mathbb{R}^{D\times N}$ be the time-series vector of cross-sectionally normalized outputs for $\alpha_i$ and $Z_i(d)\in\mathbb{R}^N$ is the output at day $d$. We define a behavioral distance based on the outputs of $\alpha_i$ and $\alpha_j$:
\begin{gather}
    d_{\text{behav}}(\alpha_i, \alpha_j) = \frac{1}{D} \sum_{d=1}^{D} (Z_i(d) - Z_j(d))^2,\\
    w_{ij} = \frac{\exp(-\|e_{\alpha_i} - e_{\alpha_j}\|^2)}{\sum_{k \in \mathcal{N}_K(\alpha_i)} \exp(-\|e_{\alpha_i} - e_{\alpha_k}\|^2)}, \quad j \in \mathcal{N}_K(\alpha_i),\\
    \label{eq:sa}
    R_{\textit{SA}}(\alpha_i) = \exp\left(-\sum_{j \in \mathcal{N}_K(\alpha_i)} w_{ij} \cdot d_{\text{behav}}(\alpha_i, \alpha_j)\right),
\end{gather}

where $\mathcal{N}_K(\alpha_i)$ is $K$-nearest neighbors of $\alpha_i$.

\subsection{Multi-Faceted Reward Function and Training Objective}
\label{subsec:reward_objective}

The effectiveness of the GFlowNet is critically dependent on the design of the reward function $R(\alpha)$. To address reward sparsity and guide exploration effectively, we design a dense, multi-faceted reward function that dynamically combines several components.

The total reward for a completed $\alpha$ at training step $T$ is a weighted sum of three components:
\begin{enumerate}
    \item \textbf{Terminal Performance Reward} ($R_{\text{IC}}$): The primary measure of an alpha's predictive power, defined as its Information Coefficient:
    \begin{equation}
    \label{eq:ic}
        R_{\text{IC}}(\alpha) = \text{IC}(\alpha,y) = \left| \mathbb{E}_{d} \left[ \frac{\text{Cov}(\alpha(X_d), y_d)}{\sqrt{\text{Var}(\alpha(X_d))\cdot\text{Var}(y_d)}} \right] \right|.
    \end{equation}
    \item \textbf{Structure-Aware Reward} ($R_{\textit{SA}}$): As defined in Eq.~\ref{eq:sa}, this reward provides a dense signal for aligning the alpha's structural embedding with its behavior.
    \item \textbf{Novelty Reward} ($R_{\text{NOV}}$): To encourage the discovery of novel alphas, we introduce a novelty reward. It penalizes similarity to a dynamically updated library $\mathcal{F}_{\text{known}}$ of previously discovered high-quality alphas. The definition is:
    \begin{equation}
    \label{eq:nov}
        R_{\text{NOV}}(\alpha) = 1 - \max_{\alpha' \in \mathcal{F}_{\text{known}}} |\text{IC}(\alpha, \alpha')|.
    \end{equation}
\end{enumerate}

These reward components are combined using a time-dependent weighting scheme to balance different objectives throughout the training process. The final reward function is:
\begin{equation}
\label{eq:reward}
    R(\alpha, T) = R_{\text{IC}}(\alpha) + \lambda(T)R_{\text{SA}}(\alpha) + \eta(T)R_{\text{NOV}}(\alpha),
\end{equation}
where $\lambda(T) = (1 - \frac{T}{T_{\text{anneal}}}) \cdot \lambda_{\max}$ is a scheduling function that gradually decreases the weight of the structure-aware reward, and $\eta(T) = (1 - \frac{t}{T_{\text{anneal}}}) \cdot \eta_{\max}$ is a weight for the novelty reward.

Furthermore, to prevent premature convergence and encourage fine-grained exploration at the action level, we add a policy entropy bonus to our final training objective. The objective is to minimize the expected Trajectory Balance loss regularized by the entropy of the forward policy:
\begin{gather}
\label{eq:ent}
\mathcal{L}_{ENT}=-\mathbb{E}_{\tau \sim P_F(\tau;\theta)} \left[ \sum_{t=0}^{n-1} H(\pi_\theta(\cdot|s_t)) \right],\\
\label{eq:loss}
    \mathcal{L}_{\text{final}} = \mathbb{E}_{\tau \sim P_F(\tau;\theta)}[\mathcal{L}_{\text{TB}}(\tau)] + \beta\cdot\mathcal{L}_{ENT},
\end{gather}
where $H(\pi_\theta(\cdot|s_t))$ is the entropy of the action selection policy at state $s_t$ and $\beta$ is a hyperparameter controlling the strength of the entropy regularization. This comprehensive objective guides \method{} to learn a generative policy that produces a diverse, novel, and highly predictive portfolio of alpha alphas.

\paragraph{Compatibility of reward components.}
By construction, the three reward components share the same sign for desirable behaviours. Higher predictive performance, higher structure to behaviour alignment, and higher novelty all increase the overall reward, so there is no explicit negative coupling between terms. In the alpha-mining context, $R_{\text{IC}}$ drives predictive quality, $R_{\text{SA}}$ shapes the embedding space so that structurally similar alphas behave similarly, and $R_{\text{NOV}}$ discourages redundancy with respect to the evolving factor library rather than competing with $R_{\text{IC}}$. The components are therefore designed to be compatible and to act in a complementary way during training, rather than to optimise conflicting objectives.

\subsection{Alpha Combination}
\label{subsec:combination}

For the combination stage, we follow the approach proposed in \textit{AlphaForge}~\citep{AlphaForge}. Specifically, instead of fixing a static set of alphas, the framework performs a dynamic re-selection and linear combination of mined alphas. At each period, recently effective alphas are filtered and re-weighted through simple linear regression, yielding a time-varying ``Mega-Alpha.'' 

This design is advantageous because it adapts quickly to regime shifts while maintaining interpretability: alpha contributions remain transparent, and the portfolio avoids overfitting by discarding stale or redundant signals. Compared with complex non-linear combiners, this method offers a balance between robustness, efficiency, and explanatory clarity.

\section{Experiments and Results}

\subsection{Experiment Setting}

\paragraph{Evaluation Metrics.}Based on prior work~\citep{alphagen,tang2025alphaagentllmdrivenalphamining} and real-world trading scenarios, we employed two types of metrics for model evaluation.(1): \textbf{Correlation Metrics}, including Information Coefficient (IC), IC Information Ratio (ICIR), Rank Information Coefficient (RIC), RIC Information Ratio (RICIR); (2): \textbf{Portfolio Metrics}, including Annualized Return (AR), Maximum Drawdown (MDD), Sharpe Ratio (SR). All metrics are better when higher. Detailed definitions and backtest settings are provided in the Appendix~\ref{app:md}.

\paragraph{Datasets.}We selected three important subsets from two major markets~\citep{yang_2020_qlib}: the CSI300, CSI500 and CSI1000 in the Chinese market, and the S\&P500 in the U.S. market. The split of the training set/validation set/test set for the Chinese market is defined as follows: 2010-01-01 to 2020-12-31 / 2021-01-01 to 2021-12-31 / 2022-01-01 to 2024-12-31. For the US market: 2010-01-01 to 2016-12-31 / 2017-01-01 to 2017-12-31 / 2018-01-01 to 2020-12-31.\footnote{Due to limitations in the data source, the US market data used in this study concludes on 2020-12-31.} Detailed hyperparameter settings are provided in Appendix~\ref{app:hs}.

\paragraph{Baselines.} We compare \method{} with several baseline approaches: (1) Traditional machine learning methods include \textbf{MLP}~\citep{murtagh1991mlp}, \textbf{LightGBM}~\citep{ke_lightgbm_2017}, and \textbf{XGBoost}~\citep{chen_xgboost_2016}; (2) GA-based methods include \textbf{GP}~\citep{chen_2021_gp}; (3) RL-based methods include \textbf{AlphaGen}~\citep{alphagen} and \textbf{AlphaQCM}~\citep{zhu_alphaqcm}; (4) Generative adversarial networks-based methods include \textbf{AlphaForge}~\citep{AlphaForge}; (5) Large Language Model(LLM)-based methods include \textbf{AlphaAgent}~\citep{tang2025alphaagentllmdrivenalphamining}. The details of baseline are available at Appendix~\ref{app:bs}.

\subsection{Overall Performance}
Table~\ref{tab:res-main} summarizes results across CSI300/500/1000 and S\&P500: \method{} ranks first on all correlation metrics, with notably higher ICIR/RICIR, and these gains translate into the best portfolio outcomes (highest annualized return, lowest drawdown, highest Sharpe). 
\begin{table}[h]
    \belowrulesep=0pt \aboverulesep=0pt
    \centering
    \caption{Performance Comparison of Different Methods on CSI300, CSI500, CSI1000 (China) and S\&P500 (U.S.). Bold and underlined numbers represent the best and second-best performance across all compared approaches, respectively.}
    \resizebox{\textwidth}{!}{
    \begin{tabular}{c|c|cccc|ccc}
    \toprule
    \multirow[c]{2}{*}{Dataset} & \multirow[c]{2}{*}{Method} &
    \multicolumn{4}{c|}{\textbf{Correlation Metrics}} & \multicolumn{3}{c}{\textbf{Portfolio Metrics}} \\
    \cmidrule{3-9}
    & & \textbf{\textit{IC}} & \textbf{\textit{ICIR}} & \textbf{\textit{RIC}} & \textbf{\textit{RICIR}} & \textbf{\textit{AR}} & \textbf{\textit{MDD}} & \textbf{\textit{SR}} \\
    \midrule
    \multirow[c]{9}{*}{CSI300} & MLP       & 0.020 & 0.158 & 0.019 & 0.142 & 3.54\% & -20.9\% & 0.68 \\
    & LightGBM  & 0.011 & 0.124 & 0.006 & 0.064 &2.61\% & -18.5\% & 0.53  \\
    & XGBoost   & 0.031 & 0.243 & 0.033 & 0.248 & 5.40\%  & \underline{-17.5\%} & 1.26 \\
    & GP        & 0.026 & 0.215 & 0.028 & 0.216 & \underline{6.80\%}  & -17.6\% & \underline{1.55} \\
    & AlphaGen  & \underline{0.058} & \underline{0.414} & \underline{0.057} & \underline{0.360} & 4.00\% & -22.6\% & 0.76  \\
    & AlphaQCM  & 0.043 & 0.262 & 0.042 & 0.246 & 1.95\% & -24.8\% & 0.36 \\
    & AlphaForge & 0.041 & 0.259 & 0.052 & 0.306 & 3.90\% & -21.9\%  & 0.88 \\
    & AlphaAgent & 0.051 & 0.325 & 0.056 & 0.329 & 2.16\%   & -26.9\%  & 0.65 \\
    \cmidrule{2-9}
    &\cellcolor{LightCyan} \textbf{\method{}(ours)} &\cellcolor{LightCyan}\textbf{0.079} &\cellcolor{LightCyan} \textbf{0.496} &\cellcolor{LightCyan} \textbf{0.094} &\cellcolor{LightCyan} \textbf{0.583} &\cellcolor{LightCyan} \textbf{7.62\%} &\cellcolor{LightCyan} \textbf{-17.3\%} &\cellcolor{LightCyan} \textbf{1.71} \\
    \midrule
    \multirow[c]{9}{*}{CSI500} & MLP       & 0.017 & 0.185 & 0.020 & 0.233 & 1.56\% & -24.3\% & 0.27 \\
    & LightGBM  & 0.024 & 0.305 & 0.021 & 0.264 & 4.61\% & -17.5\% & 0.89 \\
    & XGBoost   & 0.039 & 0.365 & 0.052 & 0.528 & \underline{5.50}\% & -17.1\% & \underline{1.15} \\
    & GP        & 0.014 & 0.238 & 0.022 & 0.233 & 3.04\% & -19.4\% & 0.56 \\
    & AlphaGen  & 0.032 & 0.270 & 0.031 & 0.230 & 1.15\% & -32.4\% & 0.19 \\
    & AlphaQCM & 0.048 & 0.378 & 0.073 & 0.546 & 4.06\% & -24.0\% & 0.75 \\
    & AlphaForge & 0.053 & 0.345 & \underline{0.083} & \underline{0.600} & 4.18\% & \underline{-16.7}\% & 0.93 \\
    & AlphaAgent & \underline{0.053} & \textbf{0.396} & 0.065 & 0.495 & 1.82\%   & -22.4\% & 0.36 \\
    \cmidrule{2-9}
    &\cellcolor{LightCyan} \textbf{\method{}(ours)} &\cellcolor{LightCyan} \textbf{0.054} &\cellcolor{LightCyan} \underline{0.379} &\cellcolor{LightCyan}  \textbf{0.084} &\cellcolor{LightCyan} \textbf{0.637} &\cellcolor{LightCyan} \textbf{5.53\%}  & \cellcolor{LightCyan}\textbf{-16.0\%} &\cellcolor{LightCyan} \textbf{1.20}  \\
    \midrule
    \multirow[c]{9}{*}{CSI1000} & MLP & 0.048 & 0.384 & 0.069 & 0.621 & 3.22\% & -25.7\% & 0.47 \\
    & LightGBM & 0.067 & 0.501 & 0.083 & 0.656 & 4.98\% & -22.7\% & 0.98 \\
    & XGBoost & 0.062 & 0.498 & 0.086 & 0.695 & 4.72\% & -23.5\% & 0.91 \\
    & GP & 0.058 & 0.474 & 0.079 & 0.657 & 4.32\% & -24.7\% & 0.67 \\
    & AlphaGen & 0.071 & 0.540 & 0.092 & 0.713 & 5.27\% & -24.0\% & 0.92 \\
    & AlphaQCM & 0.065 & 0.453 & \textbf{0.107} & 0.682 & \underline{7.12\%} & -20.6\% & \underline{1.31} \\
    & AlphaForge & 0.071 & 0.537 & 0.095 & \textbf{0.742} & 6.07\% & -21.1\% & 1.06 \\
    & AlphaAgent & \underline{0.072} & \underline{0.579} & 0.089 & 0.712 & 5.51\% & \underline{-20.5\%} & 1.01 \\
    \cmidrule{2-9}
    & \cellcolor{LightCyan}\textbf{\method{}(ours)} & \cellcolor{LightCyan}\textbf{0.076} & \cellcolor{LightCyan}\textbf{0.582} & \cellcolor{LightCyan}\underline{0.096} & \cellcolor{LightCyan}\underline{0.736} & \cellcolor{LightCyan}\textbf{7.62\%} & \cellcolor{LightCyan}\textbf{-20.3\%} & \cellcolor{LightCyan}\textbf{1.33} \\
    \midrule
    \multirow[c]{9}{*}{S\&P500} & MLP       & 0.035 & 0.287 & 0.020 & 0.143 & 12.85\% & -5.6\% & 3.35 \\
    & LightGBM  & 0.023 & 0.196 & 0.018 & 0.165 &11.11\% & -5.1\% & 4.22 \\
    & XGBoost   & 0.016 & 0.159 & 0.026 & 0.168 & 13.25\% & -8.3\% & 3.61 \\
    & GP        & 0.032 & 0.308 & 0.002 & 0.016 & 13.39\% & -13.0\% & 3.15  \\
    & AlphaGen  & 0.044 & 0.396 & 0.013 & 0.127  & 10.31\% & -5.5\% & 3.96 \\
    & AlphaQCM  & 0.038 & 0.262 & 0.010 & 0.071 & 13.86\% & -13.0\% & 3.30 \\
    & AlphaForge& 0.039 & 0.422 & 0.031 & \underline{0.324} & 17.24\% & \underline{-5.0\%} & \underline{6.30} \\
    & AlphaAgent & \underline{0.048} & \underline{0.479} & \underline{0.033} & 0.315 & \underline{18.66\%}  & -5.7\%   & 6.27 \\
    \cmidrule{2-9}
    &\cellcolor{LightCyan} \textbf{\method{}(ours)} & \cellcolor{LightCyan}\textbf{0.052} & \cellcolor{LightCyan}\textbf{0.493} & \cellcolor{LightCyan}\textbf{0.038} & \cellcolor{LightCyan}\textbf{0.382} & \cellcolor{LightCyan}\textbf{19.47\%}  & \cellcolor{LightCyan}\textbf{-4.2\%} & \cellcolor{LightCyan}\textbf{6.32}  \\
    \bottomrule
    \end{tabular}
    }
    \vspace{-0.1cm}
    \label{tab:res-main}
\end{table}

we examined robustness with respect to random initialization. On CSI300, running multiple seeds for all methods yields consistent improvements of AlphaSAGE over strong baselines in both correlation and portfolio metrics, with the same performance ranking preserved across runs (see Appendix~\ref{tab:multi_runs_csi300}). This suggests that the empirical gains of AlphaSAGE are stable and not attributable to a particular random seed.

Figure~\ref{fig:main-bt} further shows that on CSI300 (2022–2024) \method{} maintains a persistent lead in cumulative returns, with smoother drawdowns, faster recoveries, and stronger rebound capture; the CSI300 index lags throughout, underscoring the value of active factor discovery and combination.

\begin{figure}[h]
    \centering
    \includegraphics[width=0.9\linewidth]{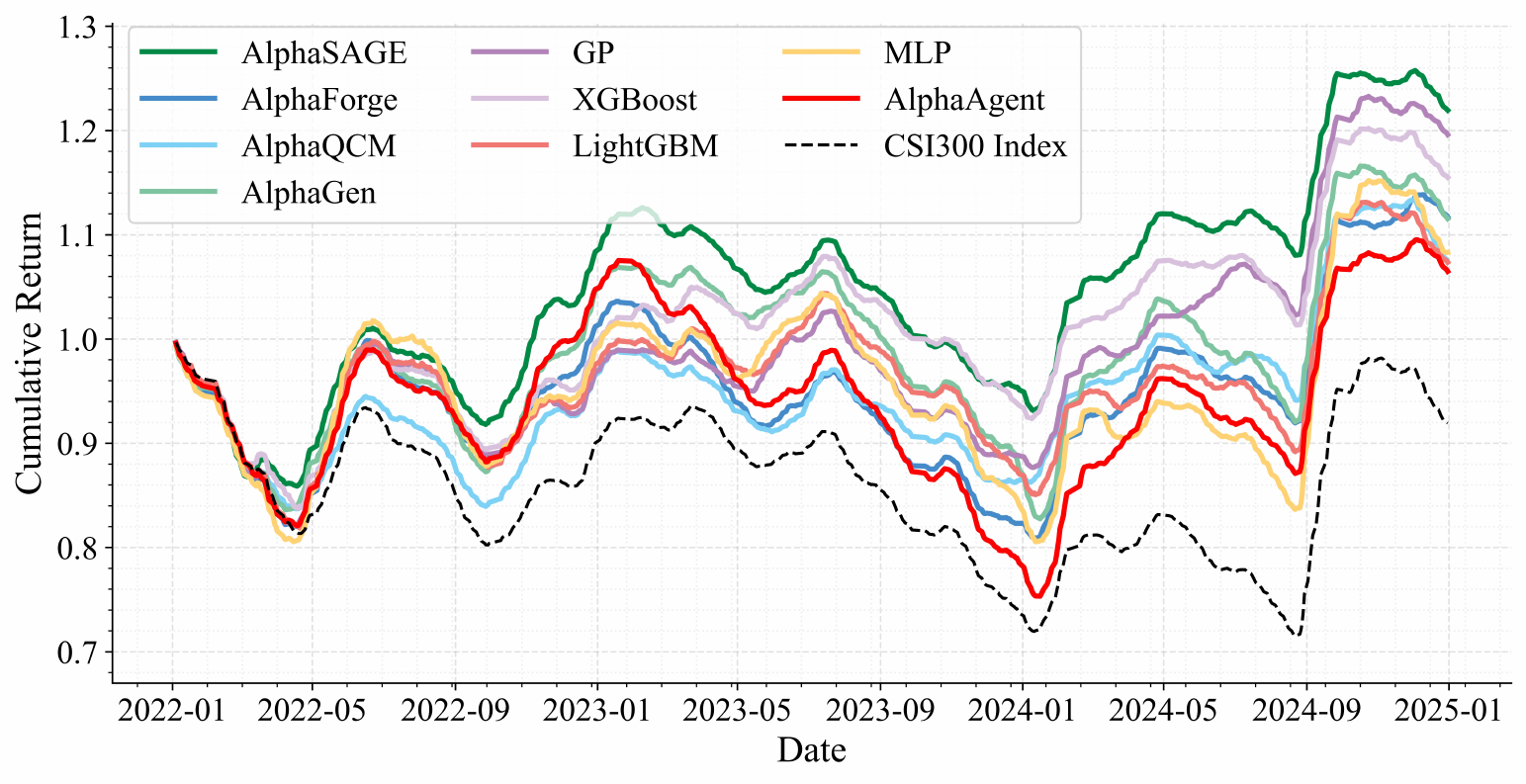}
    \caption{Cumulative return on CSI300 (2022–2024). Comparison among AlphaSAGE (ours), all baselines, and CSI300 Index benchmark.}
    \label{fig:main-bt}
    \vspace{-0.2cm}
\end{figure}

\subsection{Ablation Study}
\label{subsec:ablation}

Table~\ref{tab:ablation} shows that the plain GFlowNet baseline is weakest; adding only early stopping (ES) further hurts, implying ES needs a stronger encoder. Replacing the sequence encoder with a GNN provides the largest single lift across correlation and risk metrics, underscoring the value of structure-aware representations. Adding the structure-aware reward (SA) improves ranking stability (ICIR/RICIR) and tightens drawdowns. Introducing the novelty reward (NOV) raises both signal quality and tradability by reducing redundancy among factors. Finally, the entropy regularizer (ENT) yields the best overall results—higher IC/RIC, AR, and Sharpe with controlled MDD—indicating improved exploration without brittleness and supporting the method’s robustness to component choices.

\begin{table}[h]
    \belowrulesep=0pt \aboverulesep=0pt
    \centering
    \caption{Ablation study on CSI300. The base model is GflowNets, where ES denotes Early Stopping, GNN indicates whether a GNN or LSTM is used as the encoder, SA stands for Structure-Aware Reward, NOV represents Novelty Reward, and ENT denotes Entropy Loss.}
    \resizebox{\textwidth}{!}{
    \begin{tabular}{ccccc|cccc|ccc}
    \toprule
    \multicolumn{5}{c|}{Included Components} & \multicolumn{4}{c|}{\textbf{Correlation Metrics}} & \multicolumn{3}{c}{\textbf{Portfolio Metrics}}  \\ 
     \midrule
       ES & GNN & SA & NOV & ENT & \textbf{\textit{IC}} & \textbf{\textit{ICIR}} & \textbf{\textit{RIC}} & \textbf{\textit{RICIR}} & \textbf{\textit{AR}} & \textbf{\textit{MDD}} & \textbf{\textit{SR}}\\
      \midrule
        \ding{55} & \ding{55} & \ding{55} & \ding{55} & \ding{55} &  0.048 & 0.393 & 0.057 & 0.437 & 3.63\% & -22.9\% &0.72\\
      \ding{51} & \ding{55} & \ding{55} & \ding{55} & \ding{55} & 0.046 & 0.313 & 0.060
 & 0.397 & -0.47\% & -24.8\% & -0.11 \\
       \ding{51} & \ding{51} & \ding{55} & \ding{55} & \ding{55} & 0.070 & \underline{0.495} & 0.088 & 0.554 & 5.58\% & -19.4\% & 1.25\\
      \ding{51} & \ding{51} & \ding{51} & \ding{55} & \ding{55} & 0.071 & 0.453 & 0.088 & 0.566 & 4.68\% & \underline{-17.6\%} & 1.14\\
      \ding{51} & \ding{51} & \ding{51} & \ding{51} & \ding{55} & \underline{0.075} & 0.494 & \underline{0.092} & \textbf{0.614} & \underline{6.77\%} &  -17.8\%  & \underline{1.53}  \\
      \cellcolor{LightCyan}\ding{51} &\cellcolor{LightCyan} \ding{51} &\cellcolor{LightCyan} \ding{51} &\cellcolor{LightCyan} \ding{51} &\cellcolor{LightCyan} \ding{51} &\cellcolor{LightCyan} \textbf{0.079} &\cellcolor{LightCyan} \textbf{0.496} &\cellcolor{LightCyan} \textbf{0.094} & \cellcolor{LightCyan}\underline{0.583} &\cellcolor{LightCyan} \textbf{7.62\%} &\cellcolor{LightCyan} \textbf{-17.3\%} &\cellcolor{LightCyan} \textbf{1.71}	 \\
     \bottomrule
    \end{tabular}
    }
    \label{tab:ablation}
    
\end{table}

\subsection{Sensitivity Analysis}
We vary the weights of novelty reward ($R_{\text{NOV}}$) and structure-aware reward ($R_{\text{SA}}$) on CSI300 (Fig.~\ref{fig:sa}). For $R_{\text{NOV}}$, correlation and portfolio metrics improve at small–moderate levels and remain on a broad plateau before tapering when novelty dominates. For $R_{\text{SA}}$, improvements are largely monotonic across correlation and portfolio metrics with stable drawdowns. Overall, \method{} exhibits smooth responses without abrupt performance drops, indicating robustness to a wide range of hyperparameter choices and low sensitivity around the operating region.

\begin{figure}[t]
    \centering
    \includegraphics[width=0.9\linewidth]{}
    \vspace{-0.4cm}
    \caption{Sensitivity analysis of the weights for $R_{NOV}$ and $R_{SA}$ on CSI300. For the y-axis, IC, RIC, AR, and MDD refer to the axis on the left; ICIR, RICIR, and SR refer to the axis on the right.}
    \label{fig:sa}
    \vspace{-0.0cm}
\end{figure}
\section{Conclusion}

We introduced \method{}, a structure-aware, diversity-seeking framework for formulaic alpha discovery and combination. The approach unifies a GNN encoder for symbolic expressions, a GFlowNet generator that explores multiple high-reward modes, and a multi-signal training objective coupling predictive quality, representation–behavior alignment, novelty pressure, and entropy-based regularization. A transparent, dynamic linear combiner then translates candidate alphas into a tradable portfolio signal while maintaining interpretability.

Empirically, \method{} delivers first-rank correlation metrics across CSI300/500 and S\&P500 and consistently converts these gains into superior portfolio outcomes. On CSI300 (2022–2024), its cumulative return curve maintains a persistent lead with smoother drawdowns, faster recoveries, and stronger rebound capture, underscoring robust generalization across market regimes. Ablations attribute the largest single lift to structure-aware encoding (GNN), with self-supervised alignment improving rank stability and risk control, novelty rewarding useful diversity that lifts both signal quality and tradability, and entropy regularization sharpening exploration without brittleness. Sensitivity studies show smooth responses over broad ranges of the novelty and alignment weights, indicating low tuning burden and practical robustness.

Together, these results demonstrate that coupling structure-aware representation, diversity-seeking generation, and principled multi-signal supervision yields reliable improvements in both signal quality and its conversion to realized returns, while preserving transparency in how factors are generated and combined.

\newpage
\section*{Acknowledgments}

This paper is partially supported by grants from the National Key Research and Development Program of China with Grant No. 2023YFC3341203 and the National Natural Science Foundation of China (NSFC Grant Number 62276002).
\bibliography{iclr2026_conference}
\bibliographystyle{iclr2026_conference}
\newpage
\appendix
\section{The Use of Large Language Models (LLMs)}
We declare that the use of large language models (LLMs) during the drafting of this manuscript was confined to language-related assistance, such as sentence refinement and grammatical corrections. All substantive content was independently authored by the authors and underwent rigorous review and verification following any modifications based on LLM assistance. This research did not involve any other processes reliant upon large language models.
\section{Supplementary Background on Related Work}
\label{app:supp-related-work}

\subsection{Alpha Mining and Combination}
\label{app:alpha-supp}

\paragraph{Search paradigms.}
Early work emphasized manual, hypothesis-driven construction of formulaic alphas; automated search later expanded with genetic algorithms (mutation/crossover over expression trees)~\citep{chen_2021_gp,zhang_autoalpha_2020,cui_alphaevolve_2021} and reinforcement learning (sequential decision-making over token spaces)~\citep{alphagen,zhu_alphaqcm,zhao_quantfactor_2024,xu_textalpha2_2024}. In addition, there are also approaches based on Large Language Models (LLMs)~\citep{chainofalpha,shi2025navigating,tang2025alphaagentllmdrivenalphamining,li-2024-fama,chen_2023_can,ren_linear_2025} that generate alphas using LLMs or refine existing alphas .

\paragraph{Combination and multicollinearity.}
Given a library $\{z^{(k)}_{i,t}\}_{k=1}^K$, linear combination remains prevalent for transparency:
\begin{equation}
s_{i,t} \;=\; \sum_{k=1}^K w_k\, z^{(k)}_{i,t}, 
\qquad
\mathbf{w} \in \mathbb{R}^K.
\end{equation}
However, high cross-alpha correlation inflates estimator variance. Regularization and constraints mitigate this:
\begin{equation}
\min_{\mathbf{w}} \; \sum_t \Big\lVert \mathbf{y}_{t+\Delta} - Z_t \mathbf{w} \Big\rVert_2^2 
\,+\, \lambda_2 \lVert \mathbf{w} \rVert_2^2 \,+\, \lambda_1 \lVert \mathbf{w} \rVert_1
\quad \text{s.t.}\;\; \mathbf{1}^\top \mathbf{w}=1, \;\; \lVert \mathbf{w} \rVert_0 \leq s,
\end{equation}
where $Z_t = [z^{(1)}_{\cdot,t},\dots,z^{(K)}_{\cdot,t}]$ stacks alpha columns, and optional constraints control turnover or exposure. Diagnostics such as condition number or VIF help monitor collinearity. Beyond static weights, practice often uses rolling or regime-conditioned reweighting.

\subsection{Graph Neural Networks (GNNs)}
\label{app:gnn-supp}

\paragraph{Message passing view.}
A broad class of GNNs can be written as
\begin{equation}
\mathbf{m}_{u\to v}^{(\ell)} \;=\; \phi_\mathrm{msg}^{(\ell)}\!\big(\mathbf{h}_u^{(\ell)},\, \mathbf{h}_v^{(\ell)},\, \mathbf{e}_{uv}\big), 
\qquad 
\mathbf{a}_v^{(\ell)} \;=\; \square_{u\in\mathcal{N}(v)} \mathbf{m}_{u\to v}^{(\ell)}, 
\qquad
\mathbf{h}_v^{(\ell+1)} \;=\; \phi_\mathrm{upd}^{(\ell)}\!\big(\mathbf{h}_v^{(\ell)},\, \mathbf{a}_v^{(\ell)}\big),
\end{equation}
where $\square$ is a permutation-invariant aggregator (sum/mean/max or attention). Classical instances include GCN~\citep{yao2019gcn}, GraphSAGE~\citep{hamilton2017graphsage}, GAT~\citep{velivckovic2017gat}, GIN~\citep{xugin}, the MPNN family~\citep{gilmer2017mpnn}, and relational/heterogeneous variants (R-GCN)~\citep{schlichtkrull2018rgcn}.

\paragraph{Spectral perspective (GCN).}
Let $\hat{A}=A+I$ and $\hat{D}=\mathrm{diag}(\sum_j \hat{A}_{ij})$. The layerwise propagation is
\begin{equation}
H^{(\ell+1)} \;=\; \sigma\!\big(\hat{D}^{-1/2}\hat{A}\hat{D}^{-1/2} H^{(\ell)} W^{(\ell)}\big),
\end{equation}
interpretable as a low-pass filter on the graph. Repeated smoothing risks \emph{over-smoothing}, where node embeddings become indistinguishable; residual connections, normalization, and careful depth mitigate this~\citep{bronstein2021geometric}.

\paragraph{Attention and heterogeneity.}
GAT computes attention weights $\alpha_{uv}$ over neighbors:
\begin{equation}
\alpha_{uv} \;=\; 
\frac{\exp\!\big(\mathrm{LeakyReLU}\big(\mathbf{a}^\top [W \mathbf{h}_u \,\Vert\, W \mathbf{h}_v]\big)\big)}
{\sum_{w\in\mathcal{N}(v)} \exp\!\big(\mathrm{LeakyReLU}\big(\mathbf{a}^\top [W \mathbf{h}_w \,\Vert\, W \mathbf{h}_v]\big)\big)},
\qquad
\mathbf{h}_v' \;=\; \sigma\!\Big(\sum_{u\in\mathcal{N}(v)} \alpha_{uv}\, W \mathbf{h}_u\Big).
\end{equation}

R-GCN introduces relation-specific parameters:
\begin{equation}
\mathbf{h}_v^{(\ell+1)} \;=\; \sigma\!\Big(\sum_{r\in\mathcal{R}}\sum_{u\in\mathcal{N}_r(v)} \frac{1}{c_{v,r}} W_r^{(\ell)} \mathbf{h}_u^{(\ell)} \;+\; W_0^{(\ell)} \mathbf{h}_v^{(\ell)}\Big).
\end{equation}

\paragraph{Expressivity and readout.}
GIN links message passing to Weisfeiler–Lehman tests and uses a sum-aggregation MLP to approach maximal discriminative power in the 1-WL regime~\citep{xugin}. Graph-level outputs use readouts
\begin{equation}
\mathbf{h}_G \;=\; \mathrm{READOUT}\big(\{\mathbf{h}_v^{(L)}\}_{v\in G}\big),
\quad \text{e.g., } \mathrm{sum/mean/max}.
\end{equation}

Positional or structural encodings (e.g., Laplacian eigenvectors, distance encodings) can further enhance global awareness~\citep{li2020distance}. Practical training relies on sampling and partitioning for scale~\citep{chiang2019cluster,zheng2020distdgl}, with OGB benchmarks standardizing evaluation~\citep{hu2020ogb}.

\paragraph{Over-smoothing and over-squashing.}
Deep stacks may over-smooth; curvature-inspired rewiring, residuals, and normalization layers are common responses. Over-squashing—the compression of exponentially many distant signals into fixed-size messages—can be alleviated by attention/edge weighting, graph rewiring, and subgraph-based encoders~\citep{bronstein2021geometric}.

\subsection{Generative Flow Networks (GFlowNets)}
\label{app:gfn-supp}

\paragraph{Objective: sampling proportional to reward.}
Given a set of terminal objects $\mathcal{X}$ and a non-negative reward $R:\mathcal{X}\!\to\!\mathbb{R}_{\ge 0}$, GFlowNets seek a policy that samples $x\!\in\!\mathcal{X}$ with
\begin{equation}
P_\theta(x) \;\propto\; R(x), \qquad 
P_\theta(x) \;=\; \sum_{\tau \in \mathcal{T}(x)} P_\theta(\tau),
\end{equation}

where $\tau=(s_0\!\to\!\cdots\!\to\!x)$ is a trajectory in a DAG of states and $\mathcal{T}(x)$ is the set of trajectories ending at $x$~\citep{bengio2021flow,bengio2023gfn}.

\paragraph{Detailed-balance (DB) and trajectory-balance (TB).}
Let $F_\theta(s)>0$ denote a learnable \emph{flow} through state $s$. DB enforces local conservation:
\begin{equation}
F_\theta(s)\, P_\theta(s'\!\mid s) \;=\; F_\theta(s')\, P_\theta(s\!\mid s') \quad \text{for edges } s\!\leftrightarrow\! s'.
\end{equation}

TB provides a path-wise condition linking forward/backward policies and a scalar $Z_\theta$ (partition function):
\begin{equation}
\mathcal{L}_\mathrm{TB} \;=\; 
\mathbb{E}_{\tau}\Big[\big(\log P_\theta(\tau) + \log Z_\theta - \log R(x)\big)^2\Big],
\end{equation}

encouraging $P_\theta(x)\!\propto\! R(x)$ when minimized~\citep{malkin2022tb}. Subtrajectory balance (SubTB) generalizes TB to partial paths for credit assignment~\citep{madan2023subtb}.

\paragraph{Forward/backward policies and partition function.}
A typical parameterization factors $P_\theta(\tau) \;=\; \prod_{t=0}^{T-1} P_\theta(s_{t+1}\!\mid s_t), 
\quad P_\theta(s_t\!\mid s_{t+1}) \text{ learned for DB/SubTB},$ and treats $Z_\theta$ as a learnable scalar (or function) estimating $\sum_x R(x)$. Estimation stability can be improved via baselines, variance reduction, and regularization.

\paragraph{Mode coverage vs.\ RL/EBM/MCMC.}
Unlike standard RL objectives that often favor a single high-return mode under sparse rewards, GFlowNets learn a distribution covering \emph{multiple} modes. Compared to energy-based models (EBMs) and MCMC, GFlowNets amortize sampling via learned policies, reducing the need for long chains while retaining a reward-shaped target~\citep{lecun2006emb,salimans2015markov,bengio2023gfn}. Empirical applications span molecular design, program synthesis, and discrete structure generation~\citep{zhang2023let,jain2022biological}, with ongoing work on offline training, replay buffers, and credit assignment~\citep{lahlou2023theory}.
%

\section{Implementation Details}

\subsection{Pseudo Code}
The pseudo code of AlphaSAGE (core of mining framework) is shown in Algorithm~\ref{alg:main}. And the code is available at \url{https://github.com/BerkinChen/AlphaSAGE}.

\begin{algorithm}
\label{alg:main}
\caption{AlphaSAGE}
\KwIn{Stock features $X$, stock trend labels $y$, action set $\mathcal{A}$}
\KwOut{Final alpha pool $\mathcal{F}$}

\BlankLine
\textbf{Initialize:} GFN parameters $\theta$; probability buffer $\mathcal{B}_{\text{prob}}\!\leftarrow\!\varnothing$; embedding buffer $\mathcal{B}_{\text{emb}}\!\leftarrow\!\varnothing$; alpha pool $\mathcal{F}\!\leftarrow\!\varnothing$\;
\For{step $t = 1,2,\dots,T_{\max}$}{
    Parse current state $s_t \!\to\! \texttt{ast}_t$\;
    Compute state embedding $e_t \leftarrow f_{\text{GNN}}(\texttt{ast}_t)$\tcp*{Eq.~\ref{eq:gnn}}\
    Output action distribution $\pi_\theta(\cdot\,|\,\texttt{ast}_t)$; append to $\mathcal{B}_{\text{prob}}$\;
    \eIf{rand()$\geq p_{es}(s_t
    )$}{
        $a_t \leftarrow \texttt{SEP}$\;
    }{
        Sample or select $a_t \in \mathcal{A}$ from $\pi_\theta(\cdot\,|\,\texttt{ast}_t)$\;
    }
    \eIf{$a_t = \texttt{SEP}$}{
        Build expression $\alpha \leftarrow \texttt{BuildExpr}(s_t)$\;
        Compute alpha $z \leftarrow \texttt{ComputeAlpha}(\alpha, X)$\;
        Compute correlation reward $R_{\text{IC}} \leftarrow \texttt{IC}(z, y)$\tcp*{Eq.~\ref{eq:ic}}\
        Compute novelty to pool members $R_{\text{NOV}} \leftarrow \texttt{Novelty}(z, \mathcal{F})$\tcp*{Eq.~\ref{eq:nov}}\
        Run KNN on terminal embedding $e_t$ against pool embeddings: $\mathcal{N}\!\leftarrow\!\texttt{KNN}(e_t, \texttt{Emb}(\mathcal{F}), k)$\;
        Build distance-weight matrix $W \leftarrow \texttt{Dist2Weight}(\mathcal{N})$ and performance similarity $\texttt{sim}\!\leftarrow\!\texttt{PerfSim}(s,\mathcal{N},W)$\;
        SA reward $R_{\text{SA}} \leftarrow \texttt{SAReward}(\texttt{sim})$\tcp*{Eq.~\ref{eq:sa}}\
        Total reward $R \leftarrow \texttt{Combine}(R_{\text{IC}}, R_{\text{NOV}}, R_{\text{SA}},t)$\tcp*{Eq.~\ref{eq:reward}}\
        \If{\texttt{PassThreshold}$(R)$}{
            $\mathcal{F}\leftarrow \mathcal{F}\cup\{\alpha\}$\;
            Append $e_t$ to $\mathcal{B}_{\text{emb}}$\;
        }
        Trajectory Balance loss $\mathcal{L}_{\text{TB}} \leftarrow \texttt{TrajectoryBalance}(\mathcal{B}_{\text{prob}}, R)$\tcp*{Eq.~\ref{eq:tb}}\
        Entropy regularizer $\mathcal{L}_{\text{ent}} \leftarrow \texttt{EntropyReg}(\mathcal{B}_{\text{prob}})$; clear $\mathcal{B}_{\text{prob}}$\tcp*{Eq.~\ref{eq:ent}}\
        Total loss $\mathcal{L}_{final} \leftarrow \mathcal{L}_{\text{TB}} + \lambda_{\text{ent}} \mathcal{L}_{\text{ent}}$\tcp*{Eq.~\ref{eq:loss}}\
        Update $\theta \leftarrow \theta - \eta \nabla_\theta \mathcal{L}_{final}$\;
        Reset $S_{t+1}\!\leftarrow\!\texttt{InitState}()$; clear $\mathcal{B}_{\text{prob}}$\;
    }{
        State transition $S_{t+1} \leftarrow \texttt{Transition}(S_t, a_t)$\;
    }
}
\Return {$\mathcal{F}$}\;
\end{algorithm}
\subsection{Relation Type of RGCN}

To denote combinations between different operators and features, we have defined the following relationships: \ding{192} Unary operator with operand; \ding{193} Commutative operator with operands; \ding{194} Non-commutative operator with left operand; \ding{195} Non-commutative operator with right operand; \ding{196} Rolling operator with feature operand; \ding{197} Rolling operator with time operand. 

\subsection{Features and Operators}
All operators and features available during alpha mining are listed in the Table~\ref{tab:fo}.
\label{app:fo}
\begin{table}
    \belowrulesep=0pt \aboverulesep=0pt
    \centering
    \caption{Raw features and operators. \textbf{F}: base market features; \textbf{U}/\textbf{B}: unary/binary operators; 
    \textbf{CS}: cross-sectional operation (within-day across assets); \textbf{TS}: time-series operation (rolling window). 
    The lookback length $d$ denotes the past $d$ trading days, and the $\epsilon$ is used only for numerical stability.}
    \resizebox{\textwidth}{!}{
    \begin{tabular}{c|c|c}
    \toprule
    Name  & Type  & Description \\
    \midrule
    Open    & F    & Opening price \\
    Close   & F    & Closing price \\
    High    & F    & Daily highest price \\
    Low     & F    & Daily lowest price \\
    Vwap    & F    & Daily average price, weighted by the volume of trades at each price \\
    Volume  & F    & Trading volume (number of shares) \\
    \midrule
    Abs     & U    & Absolute value of the input \\
    Slog1p  & U    & Signed log transform: sign(input) times log of (1 plus the absolute value) \\
    Inv     & U    & Reciprocal of the input; add $\epsilon$ to avoid division by zero \\
    Sign    & U    & Sign of the input, returning -1, 0, or 1 \\
    Log     & U    & Natural logarithm of the input; add $\epsilon$ for numerical stability \\
    Rank    & U-CS & Cross-sectional rank normalization within a day, mapped to the range [0, 1] \\
    \midrule
    Add     & B    & Element-wise addition of two inputs \\
    Sub     & B    & Element-wise subtraction: first minus second \\
    Mul     & B    & Element-wise multiplication \\
    Div     & B    & Element-wise division; add a small constant to the denominator for stability \\
    Pow     & B    & Element-wise power: raise the first input to the power of the second \\
    Greater & B    & Element-wise comparison: 1 if first input is greater than second, else 0 \\
    Less    & B    & Element-wise comparison: 1 if first input is less than second, else 0 \\
     \midrule
    Ref         & U-TS & Lag operator: the value from d days ago \\
    TsMean      & U-TS & Rolling mean over the past d days \\
    TsSum       & U-TS & Rolling sum over the past d days \\
    TsStd       & U-TS & Rolling standard deviation over the past d days \\
    TsIr        & U-TS & Rolling information ratio over the past d days\\
    TsMinMaxDiff& U-TS & Rolling range over the past d days (rolling max minus rolling min) \\
    TsMaxDiff   & U-TS & Current value minus the rolling max over the past d days \\
    TsMinDiff   & U-TS & Current value minus the rolling min over the past d days \\
    TsVar       & U-TS & Rolling variance over the past d days \\
    TsSkew      & U-TS & Rolling skewness over the past d days \\
    TsKurt      & U-TS & Rolling kurtosis over the past d days \\
    TsMax       & U-TS & Rolling maximum over the past d days \\
    TsMin       & U-TS & Rolling minimum over the past d days \\
    TsMed       & U-TS & Rolling median over the past d days \\
    TsMad       & U-TS & Rolling median absolute deviation over the past d days \\
    TsRank      & U-TS & Rolling rank of the current value within the past d days, mapped to [0, 1] \\
    TsDelta     & U-TS & Change over d days: current value minus the value d days ago \\
    TsDiv       & U-TS & Ratio over d days: current value divided by the value d days ago \\
    TsPctChange & U-TS & Percentage change over the past d days \\
    TsWMA       & U-TS & Linearly decaying weighted moving average over the past d days \\
    TsEMA       & U-TS & Exponential moving average with a decay over the past d days \\
    \midrule
    TsCov  & B-TS & Rolling covariance between two inputs over the past d days \\
    TsCorr & B-TS & Rolling Pearson correlation between two inputs over the past d days \\
    \bottomrule
    \end{tabular}}
    \label{tab:fo}
\end{table}
\section{Experiment Details}
\label{app:ed}

\subsection{Metric Details}
\label{app:md}
For all evaluation metrics, we provide definitions and brief interpretations. Let
\begin{equation}
\label{eq:rho_def_again}
\rho_{d} \;=\; \frac{\mathrm{Cov}\!\left(\alpha(X_{d}),\,y_{d}\right)}
{\sqrt{\mathrm{Var}\!\left(\alpha(X_{d})\right)\,\mathrm{Var}\!\left(y_{d}\right)}}
\quad\text{(cross-sectional correlation on day $d$)},
\end{equation}
and let $R_d$ denote the portfolio return constructed from $\alpha$ on day $d$, $K$ the number of periods per year (e.g., $K{=}252$ for daily), $r_{f,d}$ the risk-free rate, and
\begin{equation}
\label{eq:wealth_again}
W_t \;=\; \sum_{u \le t} \bigl(1 + R_u\bigr)
\quad\text{(cumulative wealth)}.
\end{equation}

\begin{itemize}
    \item \textbf{Information Coefficient (IC)}: See Eq.~\ref{eq:ic}.  
    \textit{Interpretation.} Cross-sectional predictive power of the factor—how well $\alpha(X_d)$ aligns with next-period outcomes $y_d$. Using the absolute value isolates \emph{magnitude} rather than sign (long/short direction can be flipped). Higher IC indicates more informative date-wise rankings and is a prerequisite for constructing profitable long–short portfolios.

    \item \textbf{Information Ratio of IC (ICIR)}:
    \begin{equation}
    \label{eq:icir_again}
    \mathrm{ICIR} \;=\; \frac{\mathbb{E}_{d}\!\left[\rho_{d}\right]}
    {\sqrt{\mathrm{Var}_{d}\!\left(\rho_{d}\right)}}.
    \end{equation}
    \textit{Interpretation.} Time-series consistency of cross-sectional predictability: mean IC relative to its volatility. Under weak dependence, ICIR approximates a signal-to-noise measure (akin to a $t$-statistic for $\mathbb{E}[\rho_d]$), favoring factors that work \emph{consistently} rather than sporadically.

    \item \textbf{Rank Information Coefficient (RankIC)}:
    \begin{equation}
    \label{eq:rankic_again}
    \mathrm{RankIC} \;=\; \left|\,\mathbb{E}_{d}\!\left[\rho^{\text{rank}}_{d}\right]\,\right|,
    \qquad
    \rho^{\text{rank}}_{d} \;=\; 
    \frac{\mathrm{Cov}\!\left(\mathrm{rank}(\alpha(X_{d})),\,\mathrm{rank}(y_{d})\right)}
    {\sqrt{\mathrm{Var}(\mathrm{rank}(\alpha(X_{d})))\,\mathrm{Var}(\mathrm{rank}(y_{d}))}}.
    \end{equation}
    \textit{Interpretation.} Spearman-style counterpart to IC that evaluates whether higher-ranked signals correspond to higher-ranked outcomes. RankIC is robust to outliers and monotone transforms of $\alpha$, aligning with rank-based portfolio constructions.

    \item \textbf{Information Ratio of RankIC (RankICIR)}:
    \begin{equation}
    \label{eq:rankicir_again}
    \mathrm{RankICIR} \;=\; 
    \frac{\mathbb{E}_{d}\!\left[\rho^{\text{rank}}_{d}\right]}
    {\sqrt{\mathrm{Var}_{d}\!\left(\rho^{\text{rank}}_{d}\right)}}.
    \end{equation}
    \textit{Interpretation.} Time-series stability of rank-based predictive power, prioritizing factors whose cross-sectional ordering remains reliable across time.

    \item \textbf{Annualized Return (AR)}:
    \begin{equation}
    \label{eq:ar_again}
    \mathrm{AR} \;=\; K \cdot \mathbb{E}_{d}\!\left[R_{d}\right].
    \end{equation}
    \textit{Interpretation.} Economic value produced by the portfolio rule induced by $\alpha$. When compounding is material, geometric annualization via $W_t$ is preferred.

    \item \textbf{Maximum Drawdown (MDD)}:
    \begin{equation}
    \label{eq:mdd_again}
    \mathrm{MDD} \;=\; -\max_{t}\left( 1 - \frac{W_{t}}{\max_{u \le t} W_{u}} \right).
    \end{equation}
    \textit{Interpretation.} Worst peak-to-trough loss of the wealth process; a trajectory- and tail-risk metric not captured by variance alone. It is critical for leverage, risk limits, and investor experience.

    \item \textbf{Sharpe Ratio (SR)} (annualized, excess over risk-free):
    \begin{equation}
    \label{eq:sr_again}
    \mathrm{SR} \;=\; \frac{\sqrt{K}\;\mathbb{E}_{d}\!\left[R_{d}-r_{f,d}\right]}
    {\sqrt{\mathrm{Var}_{d}\!\left(R_{d}-r_{f,d}\right)}}.
    \end{equation}
    \textit{Interpretation.} Risk-adjusted return per unit of volatility for the $\alpha$-induced portfolio, enabling fair comparison across methods, universes, and rebalancing frequencies.
\end{itemize}

\paragraph{Reporting conventions.}
(i) We report IC/RankIC in absolute value (cf. Eq.~\ref{eq:ic}, Eq.~\ref{eq:rankic_again}) because factor signs are arbitrary up to inversion.  
(ii) For SR, we use excess returns $R_d - r_{f,d}$; To ensure fair comparison while simplifying the process, we set $r_{f,d}{=}0$ for comparability and state this choice explicitly. (iii) Due to differing rules between the CSI300/500 indices and the S\&P500, when backtesting on the CSI300/500, we purchase the top 20\% of stocks each trading day and sell them after 20 days (long positions only); For the S\&P500, we purchase the top 10\% of stocks each trading day and sell them after 20 days, while simultaneously selling the bottom 10\% of stocks and repurchasing them after 20 days (long-short combination).

\subsection{Hyperparameter Setting}
\label{app:hs}
The hyperparameter settings of \method{} are listed in Table~\ref{tab:hs}.

\begin{table}[h]
\belowrulesep=0pt \aboverulesep=0pt
    \centering
    \caption{The hyperparameter settings of \method{}.}
    \begin{tabular}{c|c|c}
    \toprule
    Name & Description & Value  \\
    \midrule
    Max Length & The maximum number of tokens in the $s_t$ & 20\\
    Hidden Dim & The dimension of hidden state & 128\\
    Encoder Layer & The number of layers in RGCN encoder & 2 \\
    Entropy Coef & The weight of $\mathcal{L}_{ENT}$ & 0.01 \\
    Learning Rate & The learning rate to optimize $\theta$ & 0.0001 \\
    SA Weight & The weight of $R_{SA}$ & 1.0 \\
    NOV Weight & The weight of $R_{NOV}$ & 0.3 \\
    Pool Capacity & The maximum number of alphas in alpha pool & 50\\
    Episodes & The number of trajectory sampling instances & 10000/20000\footnote{10000 for CSI300 and 20000 for CSI500/S\&P500} \\
    \bottomrule
    \end{tabular}
    \label{tab:hs}
\end{table}

\subsection{Baseline Details}
\label{app:bs}
We selected seven methods as the baseline:
\begin{itemize}
    \item \textbf{MLP}~\citep{murtagh1991mlp}: A feedforward neural network that maps tabular features to return targets, capturing nonlinear interactions. It is a strong generic baseline but can overfit without careful regularization and offers limited interpretability.

    \item \textbf{LightGBM}~\citep{ke_lightgbm_2017}: A gradient-boosted decision tree learner with histogram-based splits and leaf-wise growth, well suited to large, sparse, or heterogeneous financial features. It trains fast and handles missing values natively, though leaf-wise growth can overfit small samples without constraints.

    \item \textbf{XGBoost}~\citep{chen_xgboost_2016}: Boosted trees optimized with second-order information, shrinkage, column subsampling, and explicit regularization. Reliable on tabular alpha features, but the ensemble remains hard to interpret structurally and can be sensitive to label leakage or distribution shift.

    \item \textbf{GP}~\citep{chen_2021_gp}: Genetic programming performs symbolic regression by evolving expression trees via mutation and crossover, yielding human-readable formulas. It explores large search spaces but is prone to bloat.

    \item \textbf{AlphaGen}~\citep{alphagen}: Proposes mining \emph{synergistic} sets of formulaic alphas by directly optimizing the downstream combination model’s performance. Uses reinforcement learning to explore the expression search space, assigning the improvement in portfolio/combiner performance as the RL return so the generator preferentially discovers alphas that work well together.

\item \textbf{AlphaQCM}~\citep{zhu_alphaqcm}: Frames synergistic alpha discovery as a non-stationary, reward-sparse MDP and adopts a \emph{distributional} RL approach. Learns both a Q-function and quantiles, then applies a quantiled conditional moment method to obtain an unbiased variance estimate; the learned value and variance jointly guide exploration under non-stationarity, improving search efficiency on large universes.

\item \textbf{AlphaForge}~\citep{AlphaForge}: Introduces a two-stage framework that couples a generative–predictive neural module for factor proposal (encouraging broad, diverse exploration) with a dynamic combination stage. The combiner selects by recent performance and \emph{adapts weights over time}, addressing inconsistency and rigidity of fixed-weight ensembles and yielding stronger portfolio results in empirical tests.

\item \textbf{AlphaAgent}~\citep{tang2025alphaagentllmdrivenalphamining}: A multi-agent factor discovery framework is proposed. First, the Ideal Agent generates alpha ideals based on market observations and financial knowledge derived from inputs. Subsequently, the Factor Agent generates computable alpha expressions. Finally, the Eval Agent assesses the alphas, feeding the feedback back into the Ideal Agent for refinement.

\end{itemize}

For AlphaGen\footnote{\url{https://github.com/RL-MLDM/alphagen}}, AlphaQCM\footnote{\url{https://github.com/ZhuZhouFan/AlphaQCM}}, AlphaForge\footnote{\url{https://github.com/DulyHao/AlphaForge}} and AlphaAgent\footnote{\url{https://github.com/RndmVariableQ/AlphaAgent}}, we follow the default setting in their open-source projects. And for other baselines, we follow the setting in \url{https://github.com/DulyHao/AlphaForge}.
\section{Additional Results}

\subsection{Backtesting Results}
\label{app:becktest}
Figure~\ref{fig:bt-500} shows that on \textbf{CSI500} (2022–2024) \method{} delivers the strongest end-period wealth and sustains a clear lead for most of the horizon. It experiences \emph{smoother drawdowns} around mid-2023, \emph{recovers earlier} from late-2024 stress, and \emph{retains} more of the subsequent rally; all baselines trail, while the CSI500 index lags markedly throughout.

\begin{figure}[h]
    \centering
    \includegraphics[width=1\linewidth]{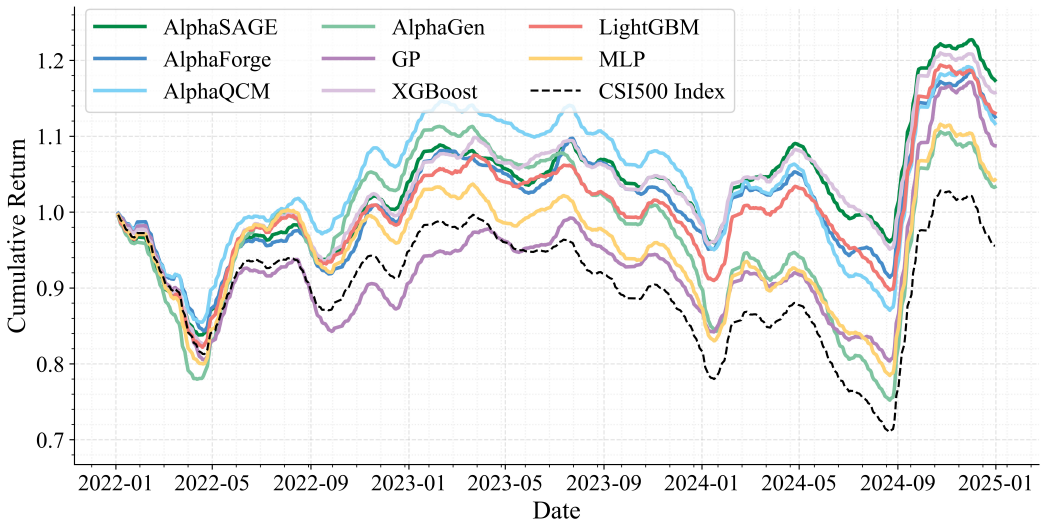}
    \vspace{-0.5cm}
    \caption{Cumulative return on CSI500 (2022–2024). Comparison among AlphaSAGE (ours), all baselines, and CSI500 Index benchmark.}
    \label{fig:bt-500}
\end{figure}

Figure~\ref{fig:bt-sp} shows that on \textbf{S\&P500} (2018–2021) \method{} tracks near the top during calm phases, then \emph{recovers faster} and \emph{compounds higher} after the 2020 drawdown, finishing with the best cumulative return. Several baselines (e.g., AlphaForge) are competitive early but fail to match the late-period acceleration; the market index remains below \method{} by the end.

\begin{figure}[h]
    \centering
    \includegraphics[width=1\linewidth]{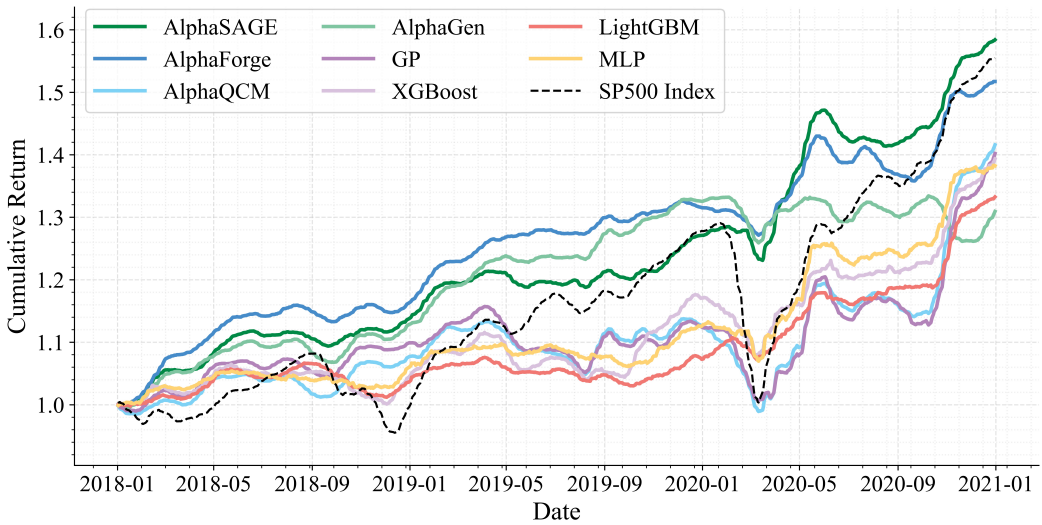}
    \vspace{-0.5cm}
    \caption{Cumulative return on S\&P500 (2018–2020). Comparison among AlphaSAGE (ours), all baselines, and S\&P500 Index benchmark.}
    \label{fig:bt-sp}
\end{figure}

\subsection{Parameter Analysis}
\label{app:param}

This appendix reports five practical knobs: \emph{training steps}, \emph{candidate-pool size}, \emph{Max Length}, \emph{Hidden Dim} and \emph{Encoder Layer}. We track correlation metrics (IC/RIC, ICIR/RICIR) and portfolio metrics (AR, MDD, SR).

\paragraph{Training steps.}
As shown in Fig.~\ref{fig:step}, \textbf{GFlowNets converge faster and train more efficiently than PPO} in the alpha-mining setting. IC/RIC and ICIR/RICIR rise sharply in early iterations and reach a high, stable plateau with lower variance; PPO improves more slowly and exhibits larger oscillations throughout.

\begin{figure}[h]
    \centering
    \includegraphics[width=\linewidth]{}
    \caption{\textbf{Learning dynamics vs.\ training steps.} GFlowNets (GFN) achieve higher plateaus earlier and with less volatility than PPO across IC/RIC and ICIR/RICIR.}
    \label{fig:step}
\end{figure}

\paragraph{Candidate-pool size.}
Figure~\ref{fig:pool} shows that \textbf{increasing the factor pool yields rapid gains followed by saturation}. All correlation metrics (IC/RIC, ICIR/RICIR) and portfolio metrics (AR, SR) improve markedly when moving from very small to moderate pool sizes, then flatten into broad plateaus; MDD improves monotonically with no instability.

\begin{figure}[h]
    \centering
    \includegraphics[width=\linewidth]{}
    \caption{\textbf{Effect of candidate-pool size.} Metrics increase quickly at small–moderate pool sizes and then stabilize, indicating diminishing returns beyond a modest pool.}
    \label{fig:pool}
\end{figure}

\paragraph{MaxLen (maximum AST length).}
We further study the effect of \texttt{MaxLen} in Eq.~\ref{eq:es}, which controls the maximum size of generated abstract syntax trees and thus the early-stop behaviour of the sampler. Table~\ref{tab:param_maxlen} reports mean AST depth and performance on CSI300 for different values of \texttt{MaxLen}. We observe a clear trade-off: \textbf{very small MaxLen restricts expressiveness and yields weak IC and portfolio metrics, performance improves steadily up to MaxLen = 20, and larger values hurt performance despite deeper trees}. This pattern suggests that overly long expressions introduce noise and make optimization harder, while a moderate cap (around 20 in our setting) balances expressiveness and stability.

\begin{table}[h]
  \centering
  \caption{Sensitivity to MaxLen on CSI300. MaxLen controls the maximum AST length (and thus early-stop threshold). Mean AST depth is measured at convergence.}
  \label{tab:param_maxlen}
  \begin{tabular}{ccccccccc}
    \toprule
    MaxLen & Mean depth & IC    & ICIR  & RIC   & RICIR & AR      & MDD      & SR   \\
    \midrule
    5      & 2.02       & 0.031 & 0.209 & 0.041 & 0.269 & -0.66\% & -27.3\%  & 0.14 \\
    10     & 3.36       & 0.037 & 0.253 & 0.040 & 0.264 & -1.95\% & -27.6\%  & 0.43 \\
    15     & 3.68       & 0.059 & 0.391 & 0.072 & 0.493 & 1.31\%  & -23.6\%  & 0.31 \\
    20     & 4.06       & 0.079 & 0.496 & 0.094 & 0.583 & 7.62\%  & -17.3\%  & 1.71 \\
    30     & 5.18       & 0.043 & 0.286 & 0.056 & 0.369 & -1.65\% & -25.6\%  & 0.41 \\
    40     & 5.66       & 0.031 & 0.245 & 0.035 & 0.270 & -0.92\% & -28.9\%  & 0.21 \\
    50     & 5.78       & 0.042 & 0.288 & 0.048 & 0.322 & 1.22\%  & -25.4\%  & 0.27 \\
    \bottomrule
  \end{tabular}
\end{table}

\paragraph{Encoder Layer.}
We next vary the number of RGCN layers in the structure-aware encoder on CSI300. As summarized in Table~\ref{tab:param_rgcn_depth}, \textbf{performance is stable across depths, with a shallow 2-layer encoder performing best}. Increasing the depth to three or four layers yields slightly weaker IC and portfolio metrics, which is consistent with over-smoothing and harder optimization under limited structural supervision. This suggests that AlphaSAGE does not require finely tuned encoder depth and is robust to reasonable choices.

\begin{table}[h]
  \centering
  \caption{Sensitivity to Encoder Layer on CSI300.}
  \label{tab:param_rgcn_depth}
  \begin{tabular}{cccccccc}
    \toprule
    Layers & IC    & ICIR  & RIC   & RICIR & AR     & MDD     & SR   \\
    \midrule
    2      & 0.079 & 0.496 & 0.094 & 0.583 & 7.62\% & -17.3\% & 1.71 \\
    3      & 0.073 & 0.473 & 0.088 & 0.572 & 7.08\% & -18.5\% & 1.50 \\
    4      & 0.068 & 0.451 & 0.083 & 0.556 & 6.58\% & -19.4\% & 1.34 \\
    \bottomrule
  \end{tabular}
\end{table}

\paragraph{Hidden Dim.}
We also probe the effect of the GFlowNet backbone size by varying the hidden dimension on CSI300. Table~\ref{tab:param_gfn_hidden} shows that \textbf{a moderate hidden size of 128 achieves the best trade-off between expressiveness and generalization}. A smaller size of 64 underfits and degrades both correlation and portfolio metrics, while enlarging to 256 offers no further gains and slightly hurts out-of-sample performance. The qualitative conclusions of the paper remain unchanged across this range, indicating that AlphaSAGE is not overly sensitive to the specific hidden dimension choice.

\begin{table}[h]
  \centering
  \caption{Sensitivity to GFlowNet hidden dimension on CSI300.}
  \label{tab:param_gfn_hidden}
  \begin{tabular}{cccccccc}
    \toprule
    Hidden Dim & IC    & ICIR  & RIC   & RICIR & AR     & MDD     & SR   \\
    \midrule
    64         & 0.059 & 0.368 & 0.078 & 0.464 & 5.42\% & -19.6\% & 1.31 \\
    128        & 0.079 & 0.496 & 0.094 & 0.583 & 7.62\% & -17.3\% & 1.71 \\
    256        & 0.073 & 0.448 & 0.089 & 0.561 & 6.88\% & -18.1\% & 1.59 \\
    \bottomrule
  \end{tabular}
\end{table}

\subsection{Reward-weight schedules}

Our multi-dimensional reward combines predictive performance, structure–behaviour alignment, and novelty via time-dependent weights on the structure-aware term $R_{\mathrm{SA}}$ and the novelty term $R_{\mathrm{NOV}}$ (see Eq.~(xx)). The design follows a pragmatic curriculum intuition: early in training we encourage exploration and representation learning in the alpha space, while later we place more emphasis on realized predictive performance.

To assess the impact of the weight schedules, we compare three simple choices for the decay function $g(T)$ on CSI300: (i) a constant schedule (no decay), (ii) a linear decay schedule, and (iii) an exponential decay schedule. In all cases, the initial coefficients $(\lambda_0, \eta_0)$ are kept fixed, and only the functional form of $g(T)$ is changed.

Table~\ref{tab:schedule_sensitivity} reports IC/RIC, ICIR/RICIR, and portfolio-level metrics (AR, MDD, SR). Linear decay yields the best overall performance, exponential decay performs similarly, and constant weights underperform. These results indicate that AlphaSAGE is not overly sensitive to the exact schedule shape within a reasonable range, and that annealing the structure-aware and novelty terms is beneficial in practice.

\begin{table}[h]
  \centering
  \caption{Sensitivity to reward-weight schedules on CSI300. Comparison of constant, linear-decay, and exponential-decay schedules for the time-dependent weights on $R_{\mathrm{SA}}$ and $R_{\mathrm{NOV}}$.}
  \label{tab:schedule_sensitivity}
  \begin{tabular}{cccccccc}
    \toprule
    Schedule type     & IC    & ICIR  & RIC   & RICIR & AR     & MDD     & SR   \\
    \midrule
    Constant          & 0.043 & 0.317 & 0.051 & 0.368 & 3.26\% & -20.3\% & 0.69 \\
    Linear decay      & 0.079 & 0.496 & 0.094 & 0.583 & 7.62\% & -17.3\% & 1.71 \\
    Exponential decay & 0.075 & 0.482 & 0.095 & 0.578 & 7.56\% & -17.6\% & 1.69 \\
    \bottomrule
    \vspace{-0.5cm}
  \end{tabular}
\end{table}

\subsection{Encoder architectures}

Beyond the main results, we also compare a structure aware encoder with a sequence based alternative. In our framework, the default encoder is an RGCN that operates on abstract syntax trees (ASTs) and explicitly exploits the operator feature graph structure. As a reference, we build a Transformer encoder that consumes a linearized expression sequence (for example, prefix notation) under a similar parameter budget.

Table~\ref{tab:encoder_comparison} reports the results on CSI300. The RGCN encoder achieves higher correlation metrics (IC, RIC, ICIR, RICIR) and stronger portfolio performance (AR, SR) with a slightly smaller maximum drawdown. This comparison supports the view that the structural inductive bias provided by graph based encoding of ASTs is beneficial for formulaic alpha mining, and that the gains of AlphaSAGE cannot be attributed solely to increased model capacity.

\begin{table}[h]
  \centering
  \caption{Encoder comparison on CSI300. RGCN operates on ASTs, while the Transformer encoder consumes linearized expression sequences under a similar parameter budget.}
  \label{tab:encoder_comparison}
  \begin{tabular}{cccccccc}
    \toprule
    Encoder type & IC    & ICIR  & RIC   & RICIR & AR     & MDD     & SR   \\
    \midrule
    RGCN         & 0.079 & 0.496 & 0.094 & 0.583 & 7.62\% & -17.3\% & 1.71 \\
    Transformer  & 0.065 & 0.479 & 0.085 & 0.537 & 4.97\% & -18.8\% & 1.22 \\
    \bottomrule
    \vspace{-0.5cm}
  \end{tabular}
\end{table}

\subsection{RL versus GFlowNet and the effect of structure aware components}

To disentangle the effect of the generative framework from that of the proposed encoder and reward design, we consider four configurations on CSI300: (i) an RL baseline with the original sequential encoding and IC only reward; (ii) the same RL baseline augmented with the RGCN encoder and the composite reward; (iii) a GFlowNet with the original sequential encoding and IC only reward; and (iv) a GFlowNet with the RGCN encoder and the composite reward, which corresponds to AlphaSAGE.

Table~\ref{tab:rl_gfn_ablation} summarizes the results. Two observations are consistent with the main text. First, adding the structure aware components improves both RL and GFlowNet, increasing IC/RIC as well as AR and SR while reducing maximum drawdown. Second, without these components the GFlowNet configuration struggles under sparse IC only feedback and underperforms the RL baseline, whereas with the RGCN encoder and composite reward it achieves the best overall performance. This supports our claim that both the choice of GFlowNet and the structure aware design are important, and that the gains of AlphaSAGE come from their combination.

\begin{table}[h]
  \centering
  \caption{RL versus GFlowNet with and without structure aware components on CSI300. ``Proposed components'' denotes the use of the RGCN encoder together with the composite reward.}
  \label{tab:rl_gfn_ablation}
  \begin{tabular}{ccccccccc}
    \toprule
    Framework & Components & IC    & ICIR  & RIC   & RICIR & AR     & MDD     & SR    \\
    \midrule
    RL        & \ding{55}         & 0.058 & 0.414 & 0.057 & 0.360 & 4.00\% & -22.6\% & 0.76  \\
    RL        & \ding{51}        & 0.067 & 0.459 & 0.079 & 0.502 & 5.53\% & -20.7\% & 1.03  \\
    GFlowNet  & \ding{55}        & 0.046 & 0.313 & 0.060 & 0.397 & -0.47\%& -24.8\% & -0.11 \\
    GFlowNet  & \ding{51}       & \textbf{0.079} & \textbf{0.496} & \textbf{0.094} & \textbf{0.583} & \textbf{7.62\%} & \textbf{-17.3\%} & \textbf{1.71}  \\
    \bottomrule
    \vspace{-0.8cm}
  \end{tabular}
\end{table}

\subsection{Runtime and scalability}
\label{app:runtime}

In our implementation, the dominant computational cost of AlphaSAGE comes from evaluating candidate alphas on historical cross-sectional panels. This cost is shared with existing learning-based alpha mining systems such as AlphaGen, AlphaQCM, and AlphaForge, since all of them must compute panel-wise values for large collections of formulaic alphas. The additional components introduced by AlphaSAGE are relatively lightweight.

First, the GFlowNet sampler operates on partial abstract syntax trees (ASTs) with a bounded maximum length \texttt{MaxLen}. The number of sampling steps per alpha is therefore proportional to \texttt{MaxLen} and comparable in magnitude to the sequential expression-generation steps used in RL-based frameworks. Second, the structure-aware reward only requires computing $k$-nearest neighbors in a low-dimensional graph-embedding space. These embeddings and neighbor relationships are precomputed and cached during training, so the incremental overhead is modest. Third, the early-stopping mechanism controlled by \texttt{MaxLen} stochastically truncates trajectories once a valid expression of sufficient length is formed, which avoids wasteful exploration of overly long expression trees and empirically accelerates convergence.

In typical quantitative trading workflows, alpha mining is run offline at a relatively low frequency (for example weekly or monthly) to update the alpha library. During portfolio construction and execution, the system evaluates only the fixed set of discovered alphas, so the latency of the mining model does not directly affect real-time trading decisions.

Table~\ref{tab:runtime} reports training time for the main methods on several universes. AlphaSAGE is efficient in practice and suitable for large-scale alpha mining under realistic resource constraints.

\begin{table}[h]
\centering
\caption{Runtime comparison (hours) on different universes.}
\label{tab:runtime}
\begin{tabular}{ccccc}
\toprule
Method      & CSI300 & CSI500 & S\&P 500 & CSI1000 \\
\midrule
AlphaGen    & 1.34   & 1.68   & 1.48    & 2.11    \\
AlphaQCM    & 1.72   & 1.98   & 1.82    & 2.32    \\
AlphaForge  & 1.56   & 2.16   & 2.02    & 2.74    \\
AlphaSAGE   & \textbf{0.49}   & \textbf{0.76}   & \textbf{0.71}    & \textbf{1.12}    \\
\bottomrule
\vspace{-0.5cm}
\end{tabular}
\end{table}

\subsection{Stability across random seeds}

To assess robustness with respect to random initialization, we ran multiple independent trials on CSI300 for all RL-based methods using 5 different random seeds and report mean and standard deviation for each metric. As shown in Table~\ref{tab:multi_runs_csi300}, AlphaSAGE consistently outperforms the baselines across correlation metrics (IC/RIC, ICIR/RICIR) and portfolio metrics (AR, SR) while maintaining a smaller maximum drawdown. The ranking between methods is stable across runs, indicating that our main conclusions are not driven by a particular seed.

\begin{table}[h]
  \centering
  \vspace{-0.2cm}
  \caption{Multiple runs on CSI300. Mean (standard deviation) over different random seeds.}
  \label{tab:multi_runs_csi300}
  \begin{tabular}{cccccccc}
    \toprule
    Method     & IC             & ICIR           & RIC            & RICIR          & AR              & MDD             & SR              \\
    \midrule
    \multirow[c]{2}{*}{AlphaGen}   & 0.063  & 0.454 & 0.061 & 0.389 & 4.18\% & -22.3\% & 0.84 \\
    & (0.005) & (0.024) &  (0.006) & (0.021) & (0.004) & (0.017) & (0.126) \\
    \multirow[c]{2}{*}{AlphaQCM}   & 0.045  & 0.268 & 0.044 & 0.259 & 2.23\% & -23.8\% & 0.52 \\
    & (0.002) & (0.015) & (0.006) & (0.019) & (0.003) & (0.019) & (0.155) \\
    \multirow[c]{2}{*}{AlphaForge} & 0.040 & 0.246 & 0.049 & 0.291 & 4.01\% & -21.3\% & 0.96 \\
    & (0.003) & (0.017) & (0.005) & (0.022) & (0.003) & (0.020) & (0.112)\\
    \multirow[c]{2}{*}{AlphaSAGE}  & 0.073  & 0.473 & 0.088 & 0.572 & 7.08\% & -18.5\% & 1.50 \\
    & (0.004) & (0.022) & (0.008) & (0.026) & (0.005) & (0.022) & (0.143) \\
    \bottomrule
  \end{tabular}
\end{table}

\subsection{Subperiod performance under regime shifts}
\label{app:subperiod-performance}

In the Appendix~\ref{app:becktest} we report that AlphaSAGE delivers consistently strong IC- and ICIR-based performance across all markets and horizons, while cumulative return curves on CSI500 (2022–2024) and S\&P500 (2018–2020) are less pronounced and in some cases slightly below certain baselines. This section provides additional context for these findings.

First, AlphaSAGE is trained to discover alphas by maximizing a reward in which $R_{\text{IC}}$ is the primary component. Information coefficient measures the average cross-sectional correlation between an alpha signal and next-period returns and is a useful proxy for ranking quality, but it does not directly optimize path-dependent portfolio outcomes. Over short or turbulent subperiods, an alpha library with stable long-horizon positive IC can still produce flatter cumulative returns if the window exhibits unusually high volatility, sharp factor-premia reversals, elevated turnover costs, or compressed cross-sectional dispersion. In such regimes the mapping from cross-sectional predictability to realized P\&L can weaken temporarily, so IC-oriented discovery does not guarantee dominance in every short subwindow.

Second, the CSI500 interval from 2022 to 2024 coincides with several well-documented regime events that disproportionately affected mid- and small-cap names and induced repeated style rotations. In 2022, strict zero-COVID policies and the Shanghai lockdown weighed heavily on economic activity and risk appetite; the ongoing property-sector stress and mortgage-related incidents further pushed the market into a risk-off stance. After the late-2022 policy pivot and reopening, equity leadership rotated quickly across sectors, and in 2023–2024 policy support increasingly emphasized high-dividend and buyback themes that favored larger, more stable firms over the CSI500 universe. Against this backdrop, an IC-optimized alpha library can exhibit relatively muted cumulative returns on CSI500 over this specific window even if its average predictive edge remains positive.

Third, the S\&P500 window from 2018 to 2020 likewise spans several abrupt regime shifts. The escalation of the U.S.–China trade conflict and tighter monetary policy in 2018 contributed to a sharp Q4 sell-off and rapid factor rotations; in 2019, persistent trade uncertainty and growth concerns continued to destabilize style leadership. In early 2020, the COVID-19 shock produced an unprecedented crash followed by a policy-driven rebound with extreme cross-sectional dispersion. Such environments are known to distort the short-horizon relationship between historical IC and realized portfolio performance.

Overall, these subperiods illustrate that AlphaSAGE’s objective---maximizing an IC-dominated reward over long histories---targets robust cross-sectional predictability, while realized cumulative returns in short, highly stressed windows can be shaped by regime-specific shocks and style dynamics that lie beyond the scope of the discovery objective.

\begin{table}[h]
    \belowrulesep=0pt \aboverulesep=0pt
    \centering
    \caption{Performance Comparison of Different Methods on CSI300, CSI500, CSI1000 (China) and S\&P500 (U.S.). Bold and underlined numbers represent the best and second-best performance across all compared approaches, respectively.}
    \begin{tabular}{c|cc|cc}
    \toprule
     Method & \textbf{\textit{IC}} & \textbf{\textit{RankIC}} & \textbf{\textit{AR}} & \textbf{\textit{MDD}}\\
    \midrule
    MLP       & 0.020  & 0.019  & 3.54\% & -20.9\% \\
    LightGBM  & 0.011 & 0.006 &2.61\% & -18.5\% \\
    XGBoost   & 0.031 & 0.033 & 5.40\%  & \underline{-17.5\%} \\
    GP        & 0.026 & 0.028 & \underline{6.80\%}  & -17.6\% \\
    AlphaGen  & \underline{0.058} & \underline{0.057} & 4.00\% & -22.6\% \\
    AlphaQCM  & 0.043 & 0.042 & 1.95\% & -24.8\% \\
    AlphaForge & 0.041 & 0.052 & 3.90\% & -21.9\%  \\
    AlphaAgent & 0.051 & 0.056 & 2.16\%   & -26.9\% \\
    \midrule
    \cellcolor{LightCyan} \textbf{\method{}(ours)} &\cellcolor{LightCyan}\textbf{0.079}  &\cellcolor{LightCyan} \textbf{0.094} &\cellcolor{LightCyan} \textbf{7.62\%} &\cellcolor{LightCyan} \textbf{-17.3\%} \\
    \bottomrule
    \end{tabular}
\end{table}
\section{Proof}
\begin{proposition}[Alpha Diversity Stabilizes Estimation and Prediction]
Let $F=[\alpha_1,\dots,\alpha_N]\in\mathbb{R}^{T\times N}$ collect $N$ standardized alphas (each column has mean $0$ and variance $1$), and let $y\in\mathbb{R}^T$ be the target return. Define
\begin{equation}
\Sigma \;=\; \frac{1}{T}F^\top F \in \mathbb{R}^{N\times N},\qquad 
g \;=\; \frac{1}{T}F^\top y \in \mathbb{R}^N.
\end{equation}

The OLS estimator is $\hat\beta=\Sigma^{-1}g$. Suppose $\epsilon=y-F\beta^\star$ satisfies $\mathbb{E}[\epsilon]=0$ and $\operatorname{Var}(\epsilon)=\sigma^2 I_T$. Then:

\begin{enumerate}
\item (Estimator variance decomposition) Writing the eigendecomposition $\Sigma=Q\Lambda Q^\top$ with eigenvalues $1\ge\lambda_1\ge\cdots\ge\lambda_N>0$, we have
\begin{equation}
\operatorname{Var}(\hat\beta)\;=\;\frac{\sigma^2}{T}\Sigma^{-1}
\quad\Rightarrow\quad
\operatorname{tr}\,\operatorname{Var}(\hat\beta)\;=\;\frac{\sigma^2}{T}\sum_{i=1}^N \frac{1}{\lambda_i}.
\end{equation}

Consequently, as pairwise correlations increase and the spectrum becomes more ill-conditioned (small $\lambda_{\min}$), the total estimation variance inflates.

\item (Prediction risk amplification) The in-sample prediction variance satisfies
\begin{equation}
\mathbb{E}\bigl[\|F(\hat\beta-\beta^\star)\|_2^2\bigr]
= \operatorname{tr}\!\bigl( F\,\operatorname{Var}(\hat\beta)\,F^\top \bigr)
= \frac{\sigma^2}{T}\operatorname{tr}\!\bigl( F\Sigma^{-1}F^\top \bigr)
= \sigma^2\, \operatorname{tr}\!\bigl(\Sigma\,\Sigma^{-1}\bigr)
= \sigma^2 N,
\end{equation}

but the \emph{out-of-sample} risk for a new design with the same second moments equals
\begin{equation}
\mathcal{R}_{\text{pred}} \;=\; \sigma^2 \;+\; \mathbb{E}\bigl[(\alpha^{\!\top}(\hat\beta-\beta^\star))^2\bigr]
\;=\; \sigma^2 \;+\; \frac{\sigma^2}{T}\operatorname{tr}(\Sigma\,\Sigma^{-1}) 
\;=\; \sigma^2\Bigl(1+\frac{N}{T}\Bigr),
\end{equation}

while the \emph{uncertainty allocation across coordinates} is governed by $\Sigma^{-1}$: higher multi-collinearity (smaller $\lambda_{\min}$) yields larger coordinate-wise dispersion of $\hat\beta$ and hence less interpretability.

\item (Sensitivity to perturbations) For perturbations $(\Delta\Sigma,\Delta g)$, the linear system $\Sigma\hat\beta=g$ obeys the classical bound
\begin{equation}
\frac{\|\Delta\hat\beta\|_2}{\|\hat\beta\|_2}
\;\lesssim\; \kappa_2(\Sigma)\!\left(
\frac{\|\Delta g\|_2}{\|g\|_2} + \frac{\|\Delta\Sigma\|_2}{\|\Sigma\|_2}
\right),
\end{equation}

where $\kappa_2(\Sigma)=\|\Sigma\|_2\,\|\Sigma^{-1}\|_2=\lambda_{\max}/\lambda_{\min}$. Thus, near-collinearity (large $\kappa_2$) makes $\hat\beta$ highly unstable under small data noise or distributional drift.

\item (Two-alpha closed form) For two standardized alphas with correlation $\rho$,
\begin{equation}
\Sigma=\begin{bmatrix}1&\rho\\ \rho&1\end{bmatrix},
\qquad 
\Sigma^{-1}=\frac{1}{1-\rho^2}\begin{bmatrix}1&-\rho\\ -\rho&1\end{bmatrix},
\end{equation}

so $\operatorname{Var}(\hat\beta)=\frac{\sigma^2}{T(1-\rho^2)}\!\begin{bmatrix}1&-\rho\\ -\rho&1\end{bmatrix}$ and
$\kappa_2(\Sigma)=\frac{1+\rho}{1-\rho}$. As $\rho\to 1$, both the variance and the condition number blow up.

\item (Equicorrelated family) If $\Sigma$ has equicorrelation $\rho$ off-diagonal, then
\begin{equation}
\lambda_1=1+(N-1)\rho,\qquad \lambda_2=\cdots=\lambda_N=1-\rho,
\end{equation}

and hence
\begin{equation}
\operatorname{tr}\,\operatorname{Var}(\hat\beta)
=\frac{\sigma^2}{T}\!\left(\frac{1}{1+(N-1)\rho} + \frac{N-1}{1-\rho}\right),
\quad
\kappa_2(\Sigma)=\frac{1+(N-1)\rho}{1-\rho}.
\end{equation}

Even modest $\rho>0$ causes variance inflation linear in $N$ through the $(N-1)/(1-\rho)$ term; promoting diversity (smaller $\rho$) sharply reduces this inflation.
\end{enumerate}
\end{proposition}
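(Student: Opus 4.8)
The plan is to derive all six parts from a single linear identity for the estimation error. Substituting $y=F\beta^\star+\epsilon$ into $\hat\beta=\Sigma^{-1}g=\tfrac{1}{T}\Sigma^{-1}F^\top y$ and using $\tfrac{1}{T}F^\top F=\Sigma$ gives $\hat\beta=\beta^\star+\tfrac{1}{T}\Sigma^{-1}F^\top\epsilon$, so $\hat\beta$ is unbiased with error linear in the noise. For Part 1 I then compute $\operatorname{Var}(\hat\beta)=\tfrac{1}{T^2}\Sigma^{-1}F^\top\operatorname{Var}(\epsilon)F\Sigma^{-1}$; inserting $\operatorname{Var}(\epsilon)=\sigma^2 I_T$ and $F^\top F=T\Sigma$ collapses this to $\tfrac{\sigma^2}{T}\Sigma^{-1}$, and the trace formula follows from $\operatorname{tr}(\Sigma^{-1})=\operatorname{tr}(Q\Lambda^{-1}Q^\top)=\sum_i \lambda_i^{-1}$ via cyclicity of the trace and orthogonality of $Q$.

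For Part 2 I would apply the trace trick twice. The in-sample term is $\mathbb{E}\|F(\hat\beta-\beta^\star)\|_2^2=\operatorname{tr}(F\operatorname{Var}(\hat\beta)F^\top)=\tfrac{\sigma^2}{T}\operatorname{tr}(\Sigma^{-1}F^\top F)=\tfrac{\sigma^2}{T}\operatorname{tr}(T I_N)=\sigma^2 N$ after a cyclic rotation. For the out-of-sample risk I introduce the modeling assumption that the fresh feature vector $\alpha$ is drawn independently of the training noise with $\mathbb{E}[\alpha\alpha^\top]=\Sigma$; conditioning on the training sample and averaging over $\alpha$ gives $\mathbb{E}[(\alpha^\top(\hat\beta-\beta^\star))^2]=\operatorname{tr}(\Sigma\operatorname{Var}(\hat\beta))=\tfrac{\sigma^2}{T}\operatorname{tr}(I_N)=\tfrac{\sigma^2 N}{T}$, and adding the irreducible $\sigma^2$ yields $\sigma^2(1+N/T)$.

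Part 3 is the classical first-order perturbation bound: linearizing $(\Sigma+\Delta\Sigma)(\hat\beta+\Delta\hat\beta)=g+\Delta g$ and discarding the second-order term $\Delta\Sigma\,\Delta\hat\beta$ gives $\Sigma\Delta\hat\beta=\Delta g-\Delta\Sigma\,\hat\beta$, hence $\Delta\hat\beta=\Sigma^{-1}(\Delta g-\Delta\Sigma\,\hat\beta)$; taking norms, dividing by $\|\hat\beta\|_2$, and using $\|g\|_2\le\|\Sigma\|_2\|\hat\beta\|_2$ produces the stated bound with $\kappa_2(\Sigma)=\|\Sigma\|_2\|\Sigma^{-1}\|_2$, which equals $\lambda_{\max}/\lambda_{\min}$ since $\Sigma$ is symmetric positive definite. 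Parts 4 and 5 are then explicit spectral computations: for two alphas I invert the $2\times 2$ matrix through its determinant $1-\rho^2$ and read off eigenvalues $1\pm\rho$; for the equicorrelated family I write $\Sigma=(1-\rho)I_N+\rho\,\mathbf{1}\mathbf{1}^\top$, note that $\mathbf{1}$ is an eigenvector with eigenvalue $1+(N-1)\rho$ while its orthogonal complement yields eigenvalue $1-\rho$ of multiplicity $N-1$, and substitute into the Part 1 trace formula and the definition of $\kappa_2$.

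The only step that is not purely mechanical is the out-of-sample risk in Part 2: its correctness rests on the independence of the new design point from the training noise and on reading ``same second moments'' as $\mathbb{E}[\alpha\alpha^\top]=\Sigma$, so I would state that assumption explicitly before taking the iterated expectation. Everything else reduces to the error identity, the trace identity $\operatorname{tr}(AB)=\operatorname{tr}(BA)$, and the eigendecomposition of $\Sigma$.
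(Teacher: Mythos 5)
Your proposal is correct and follows essentially the same route as the paper's proof: the sandwich formula $\operatorname{Var}(\hat\beta)=\tfrac{\sigma^2}{T}\Sigma^{-1}$, trace cyclicity for the in-sample and out-of-sample risks, the standard first-order perturbation bound, and direct spectral computation for the two-alpha and equicorrelated cases. If anything, you are more careful than the paper in two spots it glosses over—stating the independence and second-moment assumption $\mathbb{E}[\alpha\alpha^\top]=\Sigma$ for the out-of-sample risk, and making explicit the inequality $\|g\|_2\le\|\Sigma\|_2\|\hat\beta\|_2$ that converts the raw perturbation estimate into the relative, condition-number form.
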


\begin{proof}
(1) Since $\hat\beta=(F^\top F)^{-1}F^\top y=\Sigma^{-1}g$ and $y=F\beta^\star+\epsilon$ with $\operatorname{Var}(\epsilon)=\sigma^2 I_T$, we have
\begin{equation}
\operatorname{Var}(\hat\beta)=\Sigma^{-1}\!\left(\frac{1}{T^2}F^\top\,\operatorname{Var}(y)\,F\right)\Sigma^{-1}
= \Sigma^{-1}\!\left(\frac{\sigma^2}{T^2}F^\top F\right)\Sigma^{-1}
= \frac{\sigma^2}{T}\Sigma^{-1}.
\end{equation}

Using $\Sigma=Q\Lambda Q^\top$ yields $\operatorname{tr}\,\operatorname{Var}(\hat\beta)=\frac{\sigma^2}{T}\sum_i \lambda_i^{-1}$.

(2) For in-sample variance,
\begin{equation}
\mathbb{E}\|F(\hat\beta-\beta^\star)\|_2^2
= \operatorname{tr}\bigl(F\,\operatorname{Var}(\hat\beta)\,F^\top\bigr)
= \frac{\sigma^2}{T}\operatorname{tr}(F\Sigma^{-1}F^\top).
\end{equation}

Since $F\Sigma^{-1}F^\top$ and $\Sigma\Sigma^{-1}$ share the same trace ($\operatorname{tr}(AB)=\operatorname{tr}(BA)$),
this equals $\sigma^2\,\operatorname{tr}(I_N)=\sigma^2 N$. For a new draw $\tilde{\alpha}$ with the same second moments, $\mathbb{E}[\tilde{\alpha}\tilde{\alpha}^{\!\top}]=\Sigma$, so the added generalization variance is $\frac{\sigma^2}{T}\operatorname{tr}(\Sigma\Sigma^{-1})=\frac{\sigma^2 N}{T}$, giving $\mathcal{R}_{\text{pred}}=\sigma^2(1+N/T)$; however the \emph{distribution} of this uncertainty over coordinates is governed by $\Sigma^{-1}$, worsening with ill-conditioning, which harms interpretability of individual $\hat\beta_i$.

(3) The stated perturbation bound follows from standard linear system sensitivity: for $\Sigma\hat\beta=g$, first-order analysis (or the Bauer–Fike–type arguments) gives
$\|\Delta\hat\beta\|_2 \lesssim \|\Sigma^{-1}\|_2\bigl(\|\Delta g\|_2 + \|\Delta\Sigma\|_2\|\hat\beta\|_2\bigr)$; normalizing by $\|\hat\beta\|_2$ and noting $\|\Sigma^{-1}\|_2\,\|\Sigma\|_2=\kappa_2(\Sigma)$ yields the claim.

(4)–(5) The two-alpha and equicorrelation calculations follow from direct inversion and the known eigenstructure: for equicorrelation, the all-ones vector is the top eigenvector with eigenvalue $1+(N-1)\rho$ and the orthogonal complement has eigenvalue $1-\rho$. Plugging these into part (1) gives the trace formula and condition number.

Collectively, (1)–(5) show that reducing off-diagonal correlations increases the eigenvalues of $\Sigma$, decreases $\kappa_2(\Sigma)$, shrinks $\operatorname{Var}(\hat\beta)$, and improves stability and interpretability—formally substantiating the need for diverse, weakly correlated alphas.
\end{proof}

\begin{corollary}[Regularization as a proxy for diversity]
For ridge with penalty $\lambda>0$, $\hat\beta_\lambda=(\Sigma+\lambda I)^{-1}g$ and
\begin{equation}
\operatorname{Var}(\hat\beta_\lambda)=\frac{\sigma^2}{T}(\Sigma+\lambda I)^{-1}
\quad\Rightarrow\quad
\operatorname{tr}\,\operatorname{Var}(\hat\beta_\lambda)
=\frac{\sigma^2}{T}\sum_{i=1}^N \frac{1}{\lambda_i+\lambda}.
\end{equation}

Either increasing diversity (raising the $\lambda_i$) or increasing $\lambda$ reduces variance; explicit diversity control targets the spectrum directly, often achieving lower variance without the shrinkage bias inherent in ridge.
\end{corollary}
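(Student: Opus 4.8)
The plan is to treat this corollary as a spectral variant of part~(1) of the Proposition. Two ingredients do all the work: the ridge estimator $\hat\beta_\lambda=(\Sigma+\lambda I)^{-1}g$ is a \emph{linear} functional of $y$, and $\Sigma$, $\Sigma+\lambda I$ are simultaneously diagonalized by the same orthogonal $Q$. Granting the covariance identity $\operatorname{Var}(\hat\beta_\lambda)=\frac{\sigma^2}{T}(\Sigma+\lambda I)^{-1}$ (whose justification I isolate below), the trace formula is immediate: writing $\Sigma=Q\Lambda Q^\top$ gives $\Sigma+\lambda I=Q(\Lambda+\lambda I)Q^\top$, hence $(\Sigma+\lambda I)^{-1}=Q(\Lambda+\lambda I)^{-1}Q^\top$ has eigenvalues $1/(\lambda_i+\lambda)$, and $\operatorname{tr}\operatorname{Var}(\hat\beta_\lambda)=\frac{\sigma^2}{T}\sum_i 1/(\lambda_i+\lambda)$.

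The two comparative claims then follow cheaply. Each summand $1/(\lambda_i+\lambda)$ is strictly decreasing in both $\lambda_i$ and $\lambda$, so enlarging the spectrum (promoting diversity, which by parts~(4)--(5) means lowering off-diagonal correlations) and enlarging the penalty $\lambda$ each strictly reduce $\operatorname{tr}\operatorname{Var}(\hat\beta_\lambda)$. To make the shrinkage-bias remark precise I would compute $\mathbb{E}[\hat\beta_\lambda]=(\Sigma+\lambda I)^{-1}\Sigma\,\beta^\star\neq\beta^\star$ for $\lambda>0$, and contrast it with the fact that raising the $\lambda_i$ through genuine diversity leaves $\hat\beta$ unbiased; this is precisely what underwrites the claim that explicit diversity control targets the spectrum ``without the shrinkage bias inherent in ridge.''

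The one genuine subtlety---and the step I expect to be the main obstacle---is the covariance identity itself. Substituting $y=F\beta^\star+\epsilon$, the stochastic part of $\hat\beta_\lambda$ is $(\Sigma+\lambda I)^{-1}\frac1T F^\top\epsilon$, and propagating $\operatorname{Var}(\epsilon)=\sigma^2 I_T$ yields the \emph{sandwich} $\frac{\sigma^2}{T}(\Sigma+\lambda I)^{-1}\Sigma(\Sigma+\lambda I)^{-1}$, whose eigenvalues are $\lambda_i/(\lambda_i+\lambda)^2$ rather than $1/(\lambda_i+\lambda)$; the two agree only at $\lambda=0$. The clean stated form is instead the Gaussian \emph{posterior} covariance: under the prior $\beta\sim\mathcal{N}(0,\tau^2 I)$ and likelihood $y\mid\beta\sim\mathcal{N}(F\beta,\sigma^2 I)$, the posterior covariance is $\bigl(\tau^{-2}I+\sigma^{-2}F^\top F\bigr)^{-1}=\frac{\sigma^2}{T}(\Sigma+\lambda I)^{-1}$ with $\lambda=\sigma^2/(T\tau^2)$, and the posterior mean coincides with $\hat\beta_\lambda$. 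I would therefore state this Bayesian identification explicitly, so that the displayed equality is exact rather than approximate. The qualitative message is robust to the choice: the penalty-monotonicity half survives even for the sandwich, since $\frac{\sigma^2}{T}\sum_i\lambda_i/(\lambda_i+\lambda)^2$ is also decreasing in $\lambda$, and it is the posterior reading that makes the transparent per-mode parallel $1/(\lambda_i+\lambda)$ with diversity exact.
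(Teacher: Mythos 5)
Your proposal is correct, and it is in fact more careful than the paper itself: the paper states this corollary without any proof, leaving the reader to infer that the variance identity follows by the same substitution used in part (1) of the proposition. Your central observation---that this inference fails---is exactly right. Repeating the proposition's computation for $\hat\beta_\lambda=(\Sigma+\lambda I)^{-1}g$ with $y=F\beta^\star+\epsilon$ and $\operatorname{Var}(\epsilon)=\sigma^2 I_T$ gives the sandwich matrix $\frac{\sigma^2}{T}(\Sigma+\lambda I)^{-1}\Sigma(\Sigma+\lambda I)^{-1}$, whose eigenvalues are $\lambda_i/(\lambda_i+\lambda)^2$, not $1/(\lambda_i+\lambda)$; the displayed identity $\operatorname{Var}(\hat\beta_\lambda)=\frac{\sigma^2}{T}(\Sigma+\lambda I)^{-1}$ is therefore false as a frequentist sampling variance for any $\lambda>0$. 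Your repair---reading the display as the Gaussian posterior covariance under the prior $\beta\sim\mathcal{N}(0,\tau^2 I)$ with $\lambda=\sigma^2/(T\tau^2)$, for which the posterior mean coincides with $\hat\beta_\lambda$---is the standard and correct way to make the equality exact, and your accompanying derivations (simultaneous diagonalization of $\Sigma$ and $\Sigma+\lambda I$ for the trace formula, monotonicity of each summand in $\lambda_i$ and in $\lambda$, and the bias $\mathbb{E}[\hat\beta_\lambda]=(\Sigma+\lambda I)^{-1}\Sigma\beta^\star\neq\beta^\star$ substantiating the ``shrinkage bias'' remark) are all sound.

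One further point strengthens your case for insisting on the posterior reading rather than treating the discrepancy as cosmetic: under the frequentist sandwich formula, not only does the clean per-mode form fail, but the diversity-monotonicity claim itself becomes conditional, since $\partial_{\lambda_i}\bigl[\lambda_i/(\lambda_i+\lambda)^2\bigr]=(\lambda-\lambda_i)/(\lambda_i+\lambda)^3$, so raising an eigenvalue reduces that coordinate's variance only in the regime $\lambda_i>\lambda$. As you note, penalty-monotonicity survives in both readings. Thus the corollary's qualitative message---diversity or ridge both reduce variance, but only ridge pays a bias---is salvageable, yet it genuinely requires either your Bayesian identification or an explicit restriction to $\lambda_i>\lambda$; the paper's bare statement glosses over this, and your proposal is the more defensible version of the result.
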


\end{document}